\DeclareMathOperator*{\argmin}{\arg\!\min}
\theoremstyle{definition}
\newtheorem{definition}{Definition}
\newtheorem{theorem}{Theorem}
\theoremstyle{remark}
\newtheorem{remark}{Remark}
\newtheorem{proposition}{Proposition}
\newtheorem{assumption}{Assumption}
\title{\LARGE \bf
Finite-State Decentralized Policy-Based Control With Guaranteed Ground Coverage
}
\author{Hossein Rastgoftar
\thanks{Hossein Rastgoftar is with the Department of Aerospace and Mechanical Engineering, 
        Tucson, AZ 85719, USA
        {\tt\small hrastgoftar@arizona.edu}}%
}
\begin{document}

\maketitle
\thispagestyle{empty}
\pagestyle{empty}

\begin{abstract}
We propose a finite-state, decentralized decision and control framework for multi-agent ground coverage. The approach decomposes the problem into two coupled components: (i) the structural design of a deep neural network (DNN) induced by the agents’ reference configuration, and (ii) policy-based decentralized coverage control. Agents are classified as anchors and followers, yielding a generic and scalable communication architecture in which each follower interacts with exactly three in-neighbors from the preceding layer, forming an enclosing triangular communication structure. The DNN training weights implicitly encode the spatial configuration of the agent team, thereby providing a geometric representation of the environmental target set. Within this architecture, we formulate a computationally efficient decentralized Markov decision process (MDP) whose components are time-invariant except for a time-varying cost function defined by the deviation from the centroid of the target set contained within each agent’s communication triangle. By introducing the concept of Anyway Output Controllability (AOC), we assume each agent is AOC and establish decentralized convergence to a desired configuration that optimally represents the environmental target.

\end{abstract}

\section{Introduction}
Multi-agent ground coverage is a fundamental problem in distributed control with
applications in environmental monitoring, surveillance, and distributed
sensing. A classical and widely adopted approach is \emph{Voronoi-based coverage
control}, in which agents iteratively move toward the centroids of their Voronoi
cells to optimize a spatial coverage objective in a decentralized manner. This
paradigm admits strong geometric interpretability and convergence guarantees and
has been extensively studied and extended, including density-weighted coverage,
constrained environments, and event-triggered implementations
\cite{Cortes2010,Zhong2011,Kia2018}.

Despite these advantages, Voronoi-based methods typically rely on continuous-time
dynamics, frequent neighbor recomputation, and explicit geometric constructions,
which limit scalability under communication constraints and complicate
integration with discrete decision-making and learning mechanisms. These
limitations have motivated the development of \emph{policy-based and
learning-augmented decentralized frameworks}, including finite-state and Markov
decision process formulations, to address uncertainty and scalability in
multi-agent coordination \cite{Rahmani2019,Zhang2021}. While such approaches
provide increased flexibility, they often lack explicit mechanisms for encoding
formation geometry into decentralized policies or for imposing interpretable and
structured information flow with provable convergence properties.

This paper addresses these challenges by introducing a structured,
policy-based framework that tightly couples inter-agent communication,
decision-making, and physical evolution, enabling scalable decentralized
coverage while preserving geometric meaning and analytical tractability.

\subsection{Related Work}
Diffusion-based convergence and stability results for multi-agent coverage are
reported in \cite{elamvazhuthi2018nonlinear}, while decentralized coverage using
local density feedback and mean-field approximations is studied in
\cite{biswal2021decentralized}. Leader--follower coverage strategies, including
explicit separation between coordination and coverage objectives, are
investigated in \cite{atincc2020swarm}. Adaptive decentralized coverage methods
are explored in \cite{song2011decentralized,dirafzoon2011decentralized}, and
multiscale continuous-time convergence analyses are presented in
\cite{krishnan2022multiscale}. Applications to human-centered sensing and zone
coverage planning appear in \cite{9147790,9336858}. A substantial body of work adopts Voronoi-based coverage control
\cite{bai2021adaptive,nguyen2016discretized,abbasi2017new,luo2019voronoi},
typically establishing convergence via Lyapunov-based arguments under kinematic
or single-integrator agent abstractions. Extensions addressing obstacles,
failures, and leader--follower structures are considered in \cite{bai2021adaptive},
while experimental comparisons in complex urban environments are reported in
\cite{patel2020multi}. Coverage control for heterogeneous agent teams has also received increasing
attention. Authors of \cite{santos2018heterogeneous} propose a heterogeneous coverage control framework that encodes qualitatively different sensing capabilities through agent-specific density functions in a locational cost, deriving a distributed gradient-descent controller with additional boundary terms that ensures convergence to critical points of the heterogeneous coverage objective and demonstrates improved performance over heterogeneous Lloyd-type methods in experiments. A Voronoi-based coverage control method for heterogeneous disk-shaped robots, leveraging power diagrams and constrained centroidal motion to ensure collision-free convergence to locally optimal sensing configurations, is proposed in \cite{arslan2013heterogeneous}. Sadeghi and Smith address coverage control for multiple event types with heterogeneous robots by formulating an event-specific Voronoi partitioning framework and deriving distributed algorithms with provable convergence to locally optimal sensing configurations in both continuous and discrete environments \cite{sadeghi2019event}. A coverage control framework for robots with heterogeneous maximum speeds is presented in \cite{kim2022speed}, formulating a temporal cost based on multiplicatively weighted Voronoi diagrams and deriving a gradient-based controller that yields time-optimal coverage configurations. More recent work considers multi-resource and persistent surveillance objectives \cite{coffey2023multi,hua2025persistent}.

Learning-based approaches have formulated multi-agent coverage as a decision
process using reinforcement learning and Markov decision models
\cite{lauer2000decentralized,olfati2002mdp,din2022deep,dai2020graph,xiao2020distributed}.
While these methods offer scalability and adaptability, they typically rely on
unstructured communication, large or continuous state spaces, and
gradient-based optimization, limiting interpretability and convergence analysis.

To clarify the distinction from existing learning-based coverage approaches, we
emphasize that this paper does not treat multi-agent coverage as a generic
reinforcement learning or function approximation problem. Instead, inter-agent
communication is explicitly architected through a hierarchical anchor--follower
structure induced by a reference configuration, yielding unidirectional,
feedforward information flow. This structure allows the multi-agent system
itself to be interpreted as a dynamical neural network whose neurons correspond
to physical agents and whose activations are governed by agent dynamics rather
than algebraic mappings. Learning is performed via forward-only, local updates
without gradient backpropagation or centralized critics, and each agent solves a
finite, time-invariant local Markov decision process defined geometrically within
its communication triangle. These features fundamentally distinguish the
proposed framework from existing RL- and MDP-based coverage methods.

\subsection{Contributions}
This paper proposes a \emph{policy-based, decentralized framework} for coverage
of unknown ground targets that scales to arbitrarily large teams and is
independent of individual agent dynamics. The key idea is to reinterpret
multi-agent coverage as a \emph{structured dynamical system} in which
communication, decision-making, and physical evolution are intrinsically
coupled. By organizing inter-agent communication according to a reference
configuration, the proposed approach induces a hierarchical, feedforward
coordination architecture that admits a dynamical deep neural network
interpretation while remaining fully decentralized.

The main contributions are summarized as follows:
\begin{itemize}
    \item \textbf{Structured Communication and Dynamical DNN Representation:}
    A hierarchical anchor--follower communication architecture is introduced
    that induces unidirectional, feedforward information flow, enabling the
    multi-agent system to be interpreted as a dynamical neural network whose
    neurons correspond to physical agents.

    \item \textbf{Forward-Only Learning with Local Observability:}
    Communication weights are learned using exclusively forward, local updates
    without gradient backpropagation, centralized critics, or global information.

    \item \textbf{Decentralized Policy Learning via Local MDPs:}
    Each follower agent independently learns a transition policy by solving a
    finite, time-invariant local Markov decision process defined geometrically
    within its communication triangle.

    \item \textbf{Dynamics-Agnostic Coverage via Anyway Output Controllability:}
    The notion of \emph{Anyway Output Controllability} decouples policy learning
    from specific agent dynamics, enabling uniform application to heterogeneous
    teams with nonlinear, underactuated, or high-order dynamics.
\end{itemize}

\subsection{outline}
This paper is organized as follows: The problem statement is presented in Section \ref{sec:problem_statement}. An algorithmic approach for structuring the DNN based on the agent team’s reference configuration  is developed in Section~\ref{Structuring of the DNN and Specifying the DNN Weights}. Training the DNN weighsts is defined as and MDP and presented in Section \ref{Training the DNN Weights}. Stability and convergence of the proposed policy-based decentralized coverage solution are proven in Section~\ref{Multi-Agent Coverage Dynamics and Control}. Simulation results are presented in Section~\ref{Simulation Results}, followed by the conclusion in Section~\ref{Conclusion}.

\begin{figure}
    \centering
    \includegraphics[width=0.48\textwidth]{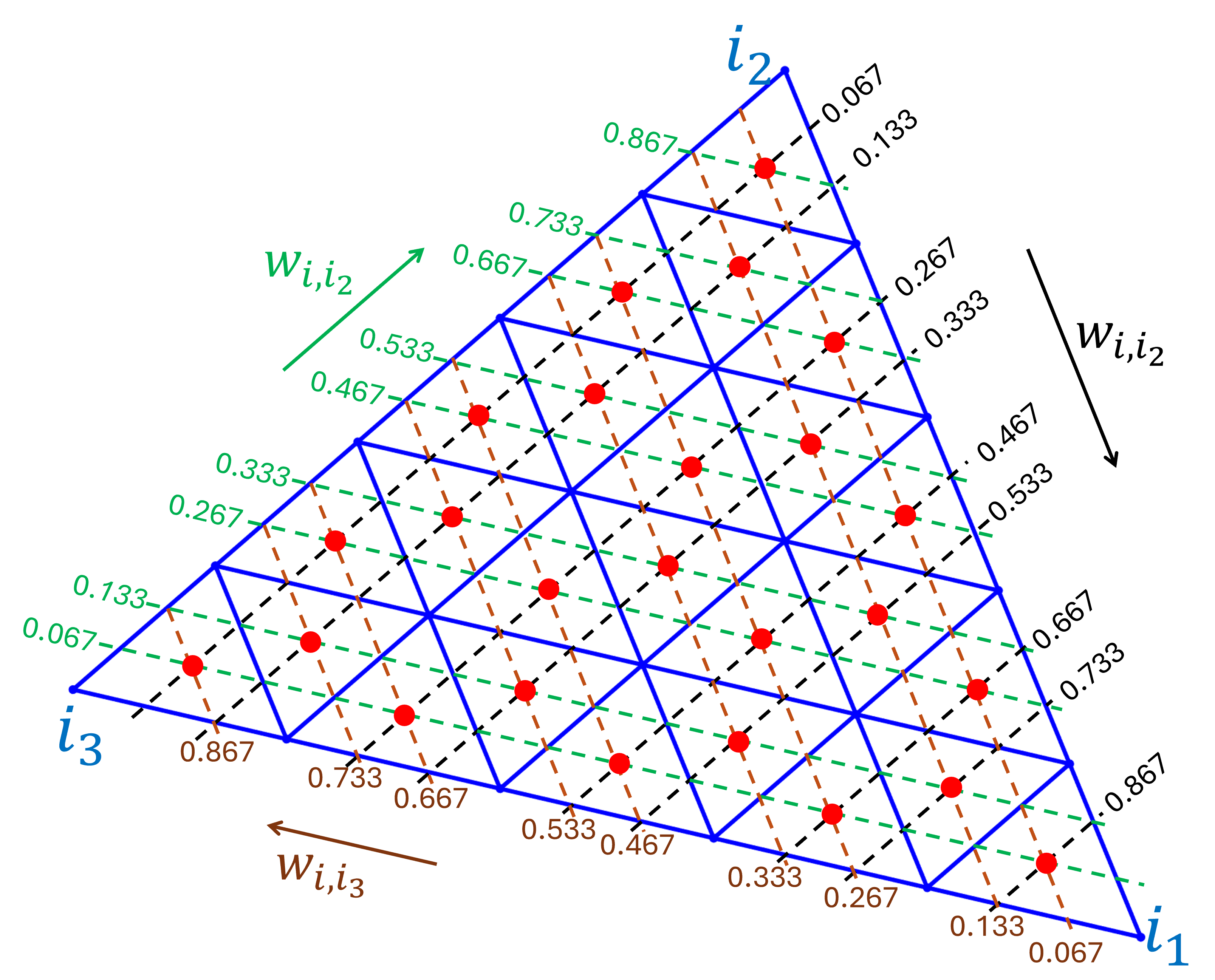}
    \vspace{-0.25cm}
    \caption{Geometric representation of the first- and second-tier communication weights,
$w_{i,i_1}\in\mathcal{W}_{i,1}$ and $w_{i,i_2}\in\mathcal{W}_{i,2}$, for $M_i=5$,
where agent $i\in\mathcal{V}\setminus\mathcal{V}_0$ has three in-neighbors
$\mathcal{N}(i)=\{i_1,i_2,i_3\}$.
}
    \label{States-Weights}
\end{figure}
\section{Problem Statement}\label{sec:problem_statement}

We consider a team of $N$ agents indexed by
\[
\mathcal{V}=\{1,\ldots,N\},
\]
tasked with providing aerial coverage of a finite set of ground targets
$\mathcal{D}$.  Agents are classified as \emph{boundary} or \emph{interior}
according to a reference configuration, and their interactions are structured
via a Delaunay neighbor network (DNN) to enable scalable coverage.
Low-level control dynamics are abstracted, and each agent is assumed to satisfy
the following output reachability property.

\begin{definition}[Anyway Output Controllability (AOC)]\label{def: AOC}
Let agent $i\in\mathcal{V}$ be described by
\begin{equation}
\begin{cases}
\mathbf{x}_i[t+1] = \mathbf{f}_i\!\left(\mathbf{x}_i[t],\mathbf{u}_i[t]\right),\\
\mathbf{r}_i[t] = \mathbf{h}_i\!\left(\mathbf{x}_i[t]\right),
\end{cases}
\end{equation}
where $\mathbf{x}_i$, $\mathbf{u}_i$, and $\mathbf{r}_i$ denote the state, input,
and output, respectively.
Agent $i$ is said to be \emph{Anyway Output Controllable} if, for any initial
state $\mathbf{x}_i[t]\in\mathcal{X}_i$, there exist an admissible input sequence
$\mathbf{u}_i(\cdot)$ and a finite time
$T_i(\mathbf{x}_i[t])<\infty$ such that
\[
\mathbf{r}_i\!\left[t+T_i(\mathbf{x}_i[t])\right]\in\mathcal{P}_i,
\]
where $\mathcal{X}_i\subset\mathbb{R}^2$ and
$\mathcal{P}_i\subset\mathbb{R}^2$ are compact sets.
\end{definition}

\begin{assumption}\label{assum1}
The time discretization is chosen uniformly across agents and sufficiently
large such that the output reachability time satisfies
\[
T_i(\mathbf{x}_i[t]) = 1,
\]
for all $i\in\mathcal{V}$ and all initial states
$\mathbf{x}_i[t]\in\mathcal{X}_i$. Consequently,
\[
\mathbf{r}_i[t+1]\in\mathcal{P}_i,
\]
holds for any admissible initial condition.
\end{assumption}
The objective of this paper is to design a decentralized framework that enables structured agent interactions and adaptive coverage of distributed targets. Specifically, we address the following problems.

\subsection*{Problem 1 (DNN Structuring).}
Given a reference configuration, a deterministic algorithm uniquely induces a DNN communication architecture from the agents’ initial spatial distribution. The agent set $\mathcal{V}$ is partitioned into $M+1$ disjoint subsets
\[
\{\mathcal{V}_l\}_{l=0}^M, \qquad \bigcup_{l=0}^M \mathcal{V}_l = \mathcal{V},
\]
where $\mathcal{V}_0$ consists of anchor nodes, and each agent $i \in \mathcal{V}_l$, $l \ge 1$, has exactly three in-neighbors.

\subsection*{Problem 2 (Decentralized Coverage via Discrete DNN Weights).}
Design a decentralized control and learning mechanism that enables the agent team to achieve high-level coverage of the distributed target set $\mathcal{D}$.  For each agent $i\in\mathcal{V}\setminus\mathcal{V}_0$, the DNN training weights are restricted to finite discrete sets
\begin{equation}\label{DiscreteWeights}
    \mathcal{W}_i=\left\{{3a-b\over 3M_i}: a=1,\cdots,M_i,~b=1,2\right\}
\end{equation}
where $M_i\in\mathbb{N}$ determines the discretization resolution. These set $\mathcal{W}_i$ consists of uniformly distributed values in $(0,1)$, ensuring strictly positive and bounded training weights. 
For each agent $i\in\mathcal{V}\setminus\mathcal{V}_0$, let $\mathcal{N}(i)$ denote its set of in-neighbors. 
The communication weight between agent $i$ and neighbor $j\in\mathcal{N}(i)$ is denoted by $w_{i,j}$ and satisfies
\begin{subequations}\label{weightcond}
\begin{equation}
w_{i,j}\in\mathcal{W}_i, \qquad \forall\, i\in\mathcal{V}\setminus\mathcal{V}_0,\ \forall\, j\in\mathcal{N}(i),
\end{equation}
\begin{equation}
\sum_{j\in\mathcal{N}(i)} w_{i,j} = 1, \qquad \forall\, i\in\mathcal{V}\setminus\mathcal{V}_0.
\end{equation}
\end{subequations}

For clarity, Fig.~\ref{States-Weights} illustrates the geometric representation of the communication weights for $M_i=5$, where agent $i\in\mathcal{V}\setminus\mathcal{V}_0$ interacts with three in-neighbors,
$\mathcal{N}(i)=\{i_1,i_2,i_3\}$.
The corresponding discrete weight set
\[
\resizebox{0.95\hsize}{!}{%
$
\mathcal{W}_i=\{0.067,0.133,0.267,0.333,0.467,0.533,0.667,0.733,0.867\}
$
}
\]
is obtained from~\eqref{DiscreteWeights}. Problem 2 formulates  the coverage problem as a decentralized MDP with time-invariant state space, action space, state transition model, and discount factor, and a time-varying cost function capturing coverage performance. The detailed MDP formulation and DNN weight training procedure are presented in Section~\ref{Training the DNN Weights}.

\vspace{-0.3cm}
\section{Structuring of the  Coverage DNN}\label{Structuring of the DNN and Specifying the DNN Weights}
The DNN communication architecture is induced by partitioning the agent set $\mathcal{V}$, based on a reference configuration, into $M+1$ disjoint groups indexed by $\mathcal{M} := \{0,1,\ldots,M\}$. This induces the decomposition
\[
\mathcal{V} = \bigcup_{l\in\mathcal{M}} \mathcal{V}_l, 
\qquad 
\mathcal{V}_l \cap \mathcal{V}_h = \emptyset,\; l \neq h,
\]
with $\mathcal{V}_l \subset \mathcal{V}$ and cardinality $|\mathcal{V}_l| = N_l$ for all $l \in \mathcal{M}$.
 Define the cumulative index
\[
P_l =
\begin{cases}
\sum_{h=0}^{l} N_h, & l\in\mathcal{M}\setminus\{0\},\\
0, & l=0,
\end{cases}
\]
and index the agents by $\{b_1,\ldots,b_N\}$. Then
\begin{equation}\label{vl}
    \mathcal{V}_l=\{b_{P_{l-1}+1},\ldots,b_{P_l}\}.
\end{equation}


To define inter-agent communication, introduce the nested sets
\begin{equation}\label{wl}
\mathcal{L}_l =
\begin{cases}
\mathcal{V}_l, & l\in\{0,M\},\\
\mathcal{V}_l \cup \mathcal{L}_{l-1}, & \text{otherwise},
\end{cases}
\qquad \forall l\in\mathcal{M}.
\end{equation}
Let $\mathcal{I}(i,l)\subseteq \mathcal{L}_{l-1}$ denote the set of neurons in layer $l-1$ connected to neuron $i \in \mathcal{L}_l$. The DNN architecture is induced from the agents’ initial formation via the algorithmic procedure in Algorithm~\ref{euclid33}, which constructs a directed graph $\mathcal{G}(\mathcal{V},\mathcal{E})$ that admits a DNN representation. In particular, $\mathcal{G}(\mathcal{V},\mathcal{E})$ determines: (i) the number of DNN layers $(M+1)$, (ii) a partition of $\mathcal{V}$ into subsets $\mathcal{V}_0,\ldots,\mathcal{V}_M$, and (iii) the inter-layer neuron connectivity.

Given $\mathcal{E} \subset \mathcal{V}\times\mathcal{V}$, the in-neighbor set of agent $i\in\mathcal{V}$ is defined as
\begin{equation}
    \mathcal{N}(i) := \{\, j \in \mathcal{V} \mid (j,i) \in \mathcal{E} \,\}.
\end{equation}
Then, for each layer $l \in \mathcal{M}\setminus\{0\}$, the interconnection set $\mathcal{I}(i,l)$ for neuron $i \in \mathcal{L}_l$ is given by
\begin{equation}
    \mathcal{I}(i,l)=
    \begin{cases}
        \mathcal{N}(i), & i \in \mathcal{V}_l = \mathcal{L}_l \setminus \mathcal{L}_{l-1},\\[2pt]
        \{i\}, & i \in \mathcal{L}_l \setminus \mathcal{V}_l,
    \end{cases}
    \qquad l \in \mathcal{M}\setminus\{0\}.
\end{equation}
\begin{remark}
Algorithm~\ref{euclid33} applies to decentralized multi-agent systems in $\mathbb{R}^n$; the ground coverage setting considered here corresponds to $n=2$.
\end{remark}

\begin{figure*}[h]
\centering
\subfigure[$l=0\in \mathcal{M}$]{\includegraphics[width=0.23\linewidth]{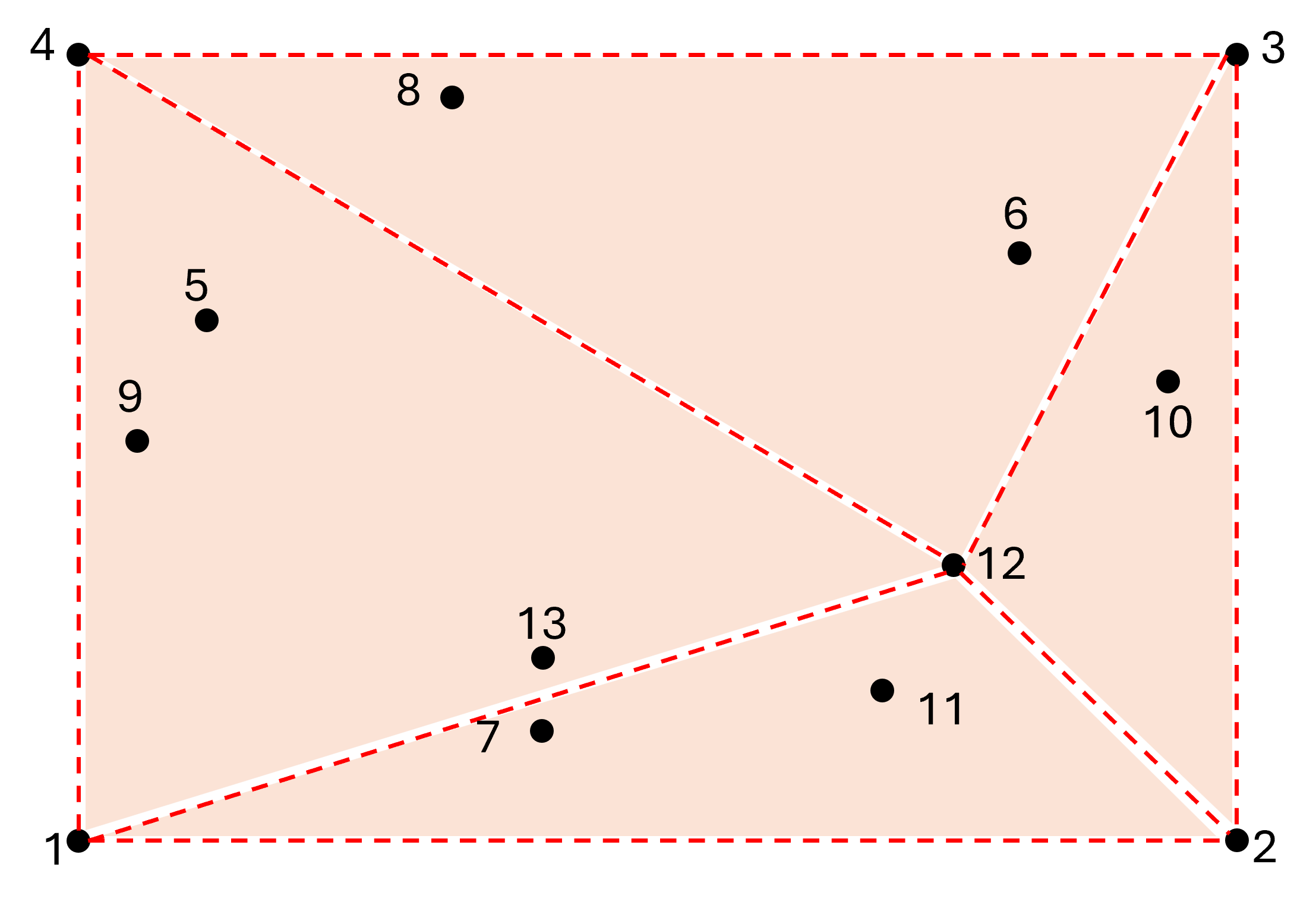}}
\subfigure[$l=1\in \mathcal{M}$]{\includegraphics[width=0.23\linewidth]{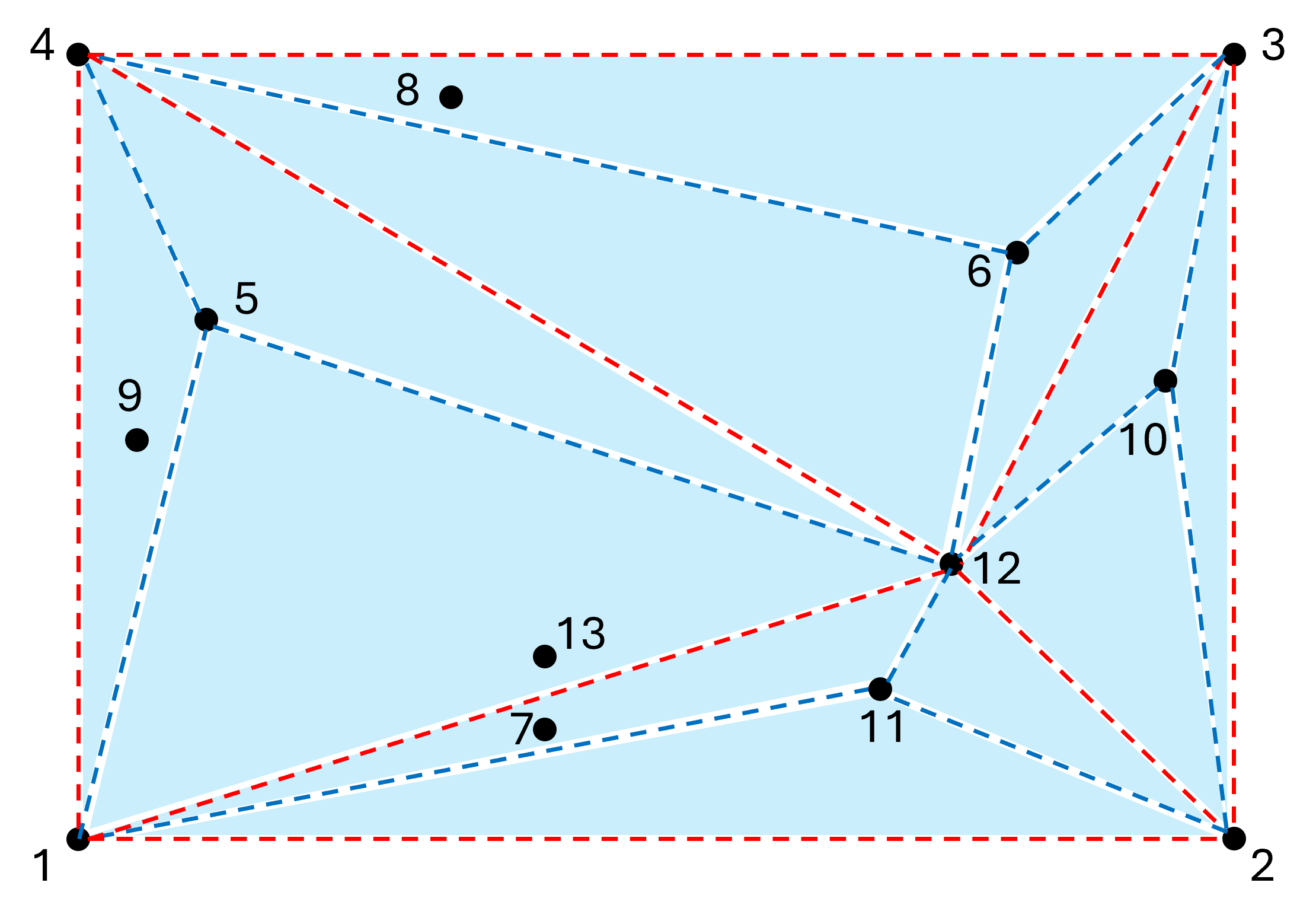}}
\subfigure[$l=2\in \mathcal{M}$]{\includegraphics[width=0.23\linewidth]{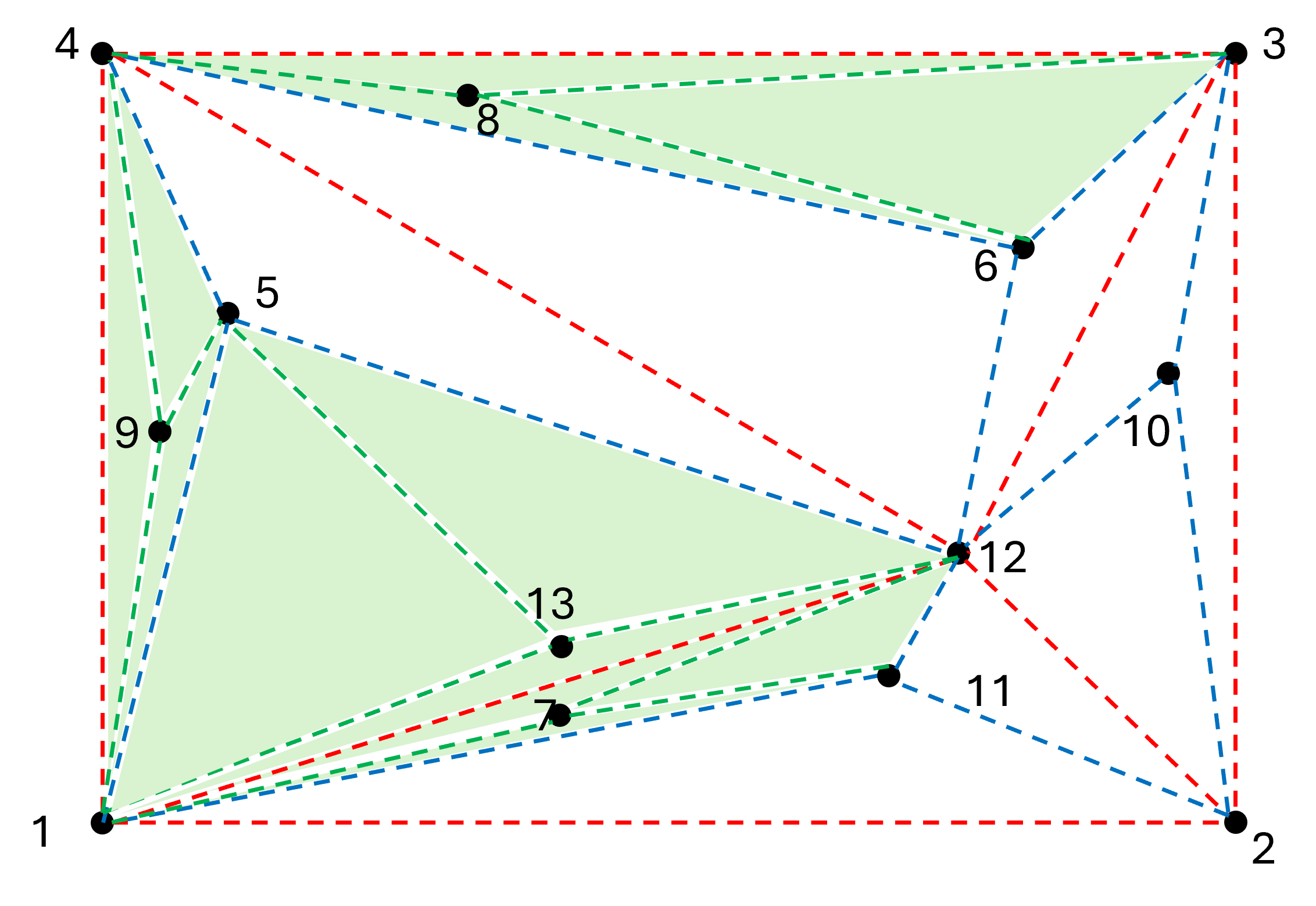}}
\subfigure[$\mathcal{M}=\left\{0,1,2\right\}$]{\includegraphics[width=0.28\linewidth]{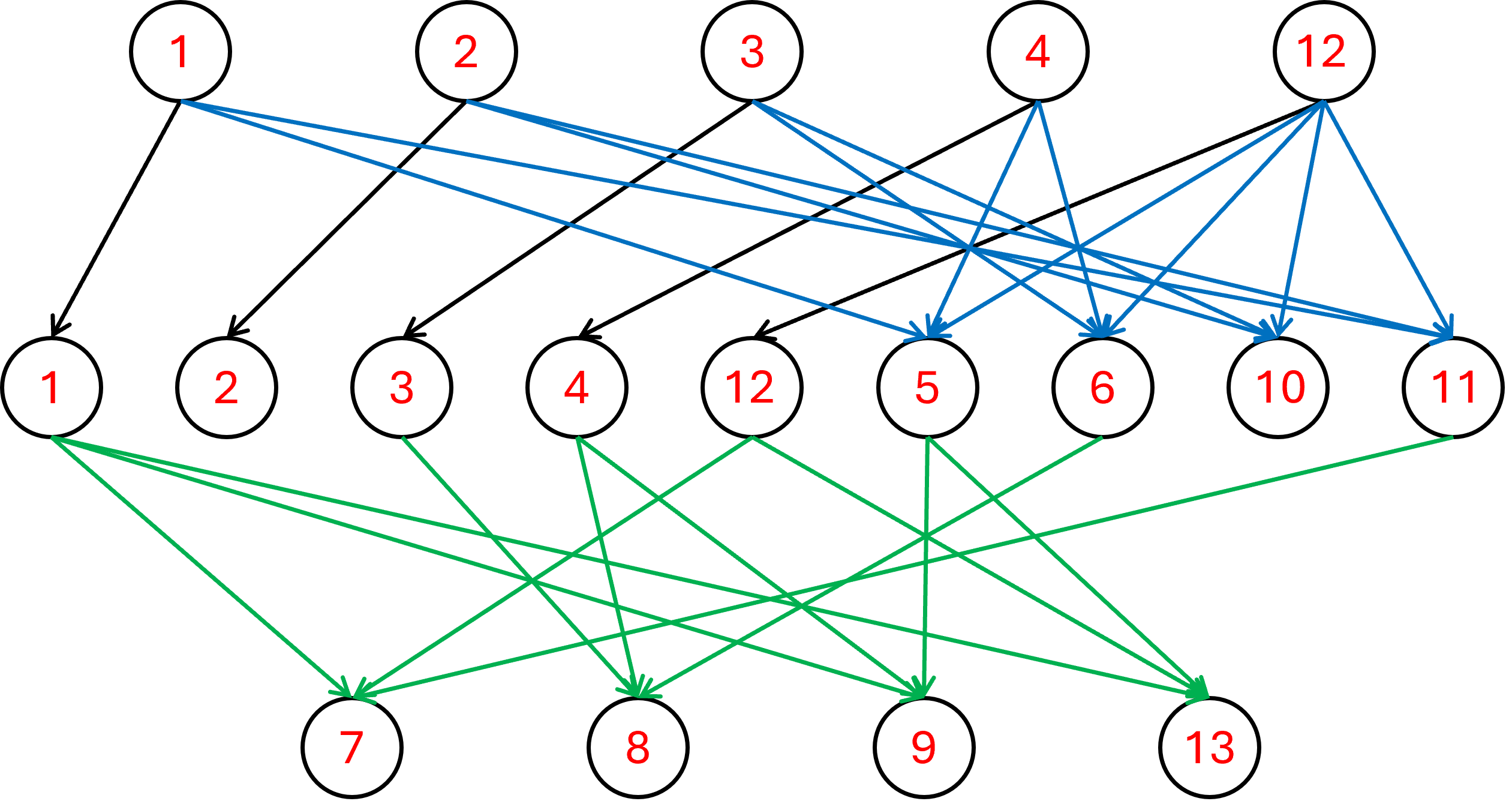}}
\vspace{-0.4cm}
\caption{Cell decompositions of the convex hull defined by the boundary agents for specifying DNN layer interconnections.
}
\label{CellDecomposition}
\end{figure*}

For each agent $i \in \mathcal{V}$, the following position-related quantities are used throughout the paper:
\begin{itemize}
    \item $\mathbf{a}_i$: reference position of agent $i$ in the initial (reference) configuration.
    \item $\mathbf{r}_i[t]$: actual position of agent $i$ at discrete time $t$, given by the output of its control system.
    \item $\mathbf{c}_i[t]$: reference input to the control system of agent $i$ at time $t$; for $i \in \mathcal{V}_0$, $\mathbf{c}_i[t]$ is constant, while for $i \in \mathcal{V}\setminus\mathcal{V}_0$, it is defined as a weighted average of the actual positions of its in-neighbor agents.
    \item $\mathbf{p}_i$: desired position of agent $i$; $\mathbf{p}_i$ is known for $i \in \mathcal{V}_0$ and unknown for $i \in \mathcal{V}\setminus\mathcal{V}_0$.
\end{itemize}
Moreover, for all $i \in \mathcal{V}\setminus\mathcal{V}_0$, the reference input satisfies $\mathbf{c}_i[t] = \mathbf{p}_i$.

\subsection{Step 1: Agent Classification} The agent set $\mathcal{V}$ is decomposed as $\mathcal{V} = \mathcal{V}_B \cup \mathcal{V}_I$, where $\mathcal{V}_B$ and $\mathcal{V}_I$ are disjoint. The set $\mathcal{V}_B = \{b_1,\ldots,b_{N_B}\}$ consists of the boundary agents located at the vertices of the convex hull enclosing the interior agents in $\mathcal{V}_I$. The polytope defined by $\mathcal{V}_B$ is referred to as the \emph{leading polytope}. Given the boundary set $\mathcal{V}_B$, the core agent is identified using one of two criteria:
(i) the interior agent minimizing the aggregate distance to the boundary agents,
\begin{equation}\label{coreagent}
b_{N_B+1}
=
\arg\min_{i \in \mathcal{V}\setminus \mathcal{V}_B}
\sum_{j \in \mathcal{V}_B}
\|\mathbf{a}_i - \mathbf{a}_j\|,
\end{equation}
or (ii) the agent located near the center of the target domain $\mathcal{D}$.

Given the set $\mathcal{V}_B$ and the designated core agent $b_{N_B+1}$, we define the set $\mathcal{V}_0$ as $\mathcal{V}_0=\mathcal{V}_B\cup \left\{b_{N_B+1}\right\}$. According to Eq.~\eqref{wl}, we have $\mathcal{V}_0 = \mathcal{L}_0$. The leading polytope can be partitioned into $m_0$ distinct simplex cells. Consequently, the set $\mathcal{L}_0$ can be expressed as a union of these simplices:
\begin{equation}
    \mathcal{L}_0=\bigcup_{h=1}^{m_0}\mathcal{R}_{0,h},
\end{equation}
where $\mathcal{R}_{0,h}$ determines vertices of the $h$-th simplex cell of the leading polytope. For better clarification, an agent team with  $N=13$ agents forms a $2$-dimensional formation  shown in Fig. \ref{CellDecomposition} (a), where $\mathcal{V}_B=\left\{1,\cdots,4\right\}$ ($N_B=4$) defines the boundary agents. Agent $b_5=12\in \mathcal{V}$ is assigned by \eqref{coreagent} as the core leader, therefore, $\mathcal{V}_0=\mathcal{L}_0=\left\{1,\cdots,4,12\right\}$ defines agents of the first layer. The convex hull defined by  $\mathcal{L}_0$ can be decomposed into $m_0=4$ triangular cells with vertices defined by  $\mathcal{R}_{0,1}=\left\{1,2,12\right\}$, $\mathcal{R}_{0,2}=\left\{2,3,12\right\}$, $\mathcal{R}_{0,3}=\left\{3,4,12\right\}$, and $\mathcal{R}_{0,4}=\left\{4,1,12\right\}$.

\subsection{Step 2: Expansion and Structuring} Set $\mathcal{V}$ can be expressed as $\mathcal{V}=\mathcal{L}_{l-1}\cup \bar{\mathcal{L}}_{l-1}$, for every $l\in \mathcal{M}\setminus\left\{0\right\}$, where  $\bar{\mathcal{L}}_{l-1}=\mathcal{V}\setminus \mathcal{L}_{l-1}$ defines the agents not belonging to $\mathcal{L}_{l-1}$.  Note that $\mathcal{V}_l\subset \bar{\mathcal{L}}_{l-1}$, if $\bar{\mathcal{L}}_{l-1}\neq \emptyset$. Also, $\mathcal{L}_{l-1}$ consists of $m_{l-1}$ distinct simplices that cover the domain contained by $\mathcal{L}_{l-1}$. Therefore, $\mathcal{L}_{l-1}$ can be expressed as:
\begin{equation}
    \mathcal{L}_{l-1}=\bigcup_{h=1}^{m_{l-1}}\mathcal{R}_{l-1,h},\qquad l\in \mathcal{M}\setminus \left\{0\right\}
\end{equation}
where $\mathcal{R}_{l-1,1}$ through $\mathcal{R}_{l-1,{m_{l-1}}}$ are vertices of distinct simplex cells that cover the domain contained by $\mathcal{L}_{l-1}$. Given a set $\mathcal{R}_{l-1,h}$ for each $h = 1, \dots, m_{l-1}$, we denote by $\mathrm{CONV}(\mathcal{R}_{l-1,h})$ the convex hull formed by the elements of $\mathcal{R}_{l-1,h}$. We also define $\mathcal{H}_{l-1,h} \subset \bar{\mathcal{L}}_{l-1}$ as the set of all nodes that lie within this convex hull, i.e., all nodes contained in $\mathrm{CONV}(\mathcal{R}_{l-1,h})$.

If $\mathcal{H}_{l-1,h}\neq \emptyset$, then:
\begin{itemize}
    \item $\mathcal{R}_{l-1,h}$ has a mentee that is determined by:
\begin{equation}\label{ah}
        \mu_{l-1,h}=\mathop{\arg\min}_{j\in \mathcal{H}_{l-1,h}}\left(\sum_{r\in \mathcal{R}_{l-1,h}}\|\mathbf{a}_r-\mathbf{a}_j\|\right),~h=1,\cdots,m_{l-1}.
    \end{equation}
    \item In-neighbors of $\mu_{l-1,h}\in\mathcal{L}_l$ is defined by $\mathcal{N}\left(\mu_{l-1,h}\right)=\mathcal{R}_{l-1,h}$.
\end{itemize}

Note that the mentee of $\mathcal{R}_{l-1,h}$, denoted by $\mu_{l-1,h}$, does not exist if $\mathcal{H}_{l-1,h}=\emptyset $). Then, for every $l\in \mathcal{M}\setminus \left\{0\right\}$, $\mathcal{V}_l$ aggengates the mentess of all non-empty simplices of $\mathcal{L}_{l-1}$ and defined as follows:
\begin{equation}
\begin{split}
    \mathcal{V}_l=&\bigg\{i\in \mathcal{H}_{l-1,h}:\mathcal{H}_{l-1,h}\neq \emptyset,~ i=\argmin_{j\in \mathcal{H}_{l-1,h}\neq \emptyset }\left(\sum_{r\in \mathcal{R}_{l-1,h}}\|\mathbf{a}_r-\mathbf{a}_j\|\right),\\
    &h=1,\cdots,m_l-1\bigg\}.
\end{split}    
\end{equation}
Therefore, for every $l \in \mathcal{M} \setminus \{0\}$, the number of agents in $\mathcal{V}_l$ satisfies $N_l = |\mathcal{V}_l| \leq m_{l-1}$. This inequality holds because not all simplices in $\mathcal{L}_{l-1}$ necessarily have mentee agents assigned to them. By knowing $\mathcal{V}_l$ and $\mathcal{L}_{l-1}$, $\mathcal{L}_l$ is defined by Eq. \eqref{wl}.

\subsection{Step 3: Cell Decomposition Update} If $\mathcal{H}_{l-1,h}\neq \emptyset $, then, $\mu_{l-1,h}$ exists  and $CONV(\mathcal{R}_{l-1,h})$ can be decomposed into $n+1$ new simplex  cells all sharing $\mu_{l-1,h}$. Therefore, the leading polytope is deterministically decomposed into $m_l$ distinct simplices by knowing $\mathcal{V}_l$, where $m_l\leq (n+1)m_{l-1}$.

\begin{algorithm}
  \caption{DNN Structure based on reference formation.}\label{euclid33}
  \begin{algorithmic}[1]
        \State \textit{Get:} Agents' initial positions $\mathbf{a}_1$ through $\mathbf{a}_{N}$
        \State \textit{Obtain:} Edge set $\mathcal{E}$, $M=\left|\mathcal{M}\right|$, and $\mathcal{V}_0$ through $\mathcal{V}_M$.
        \State Assign boundary agents $\mathcal{V}_B=\left\{b_1,\cdots,b_{N_B}\right\}$.
        \State Assign core agent $b_{N_B+1}$ using Eq. \eqref{coreagent}.
        \State Define $\mathcal{V}_0=\mathcal{V}_B\cup\left\{b_{N_B+1}\right\} $.
        \State Define $\mathcal{L}_0=\mathcal{V}_0$ and $\bar{\mathcal{L}}_{0}=\mathcal{V}\setminus \mathcal{L}_0$.
        \State Decompose the leading polytope into $m_0$ simplex cells with vertices defined by $\mathcal{R}_{0,1}$, $\cdots$, and $\mathcal{R}_{0,m_0}$.
        \State $l=1$.      
        \While{$\bar{\mathcal{L}}_{l-1}\neq \emptyset$}
            \State $m_l=0$, $N_l=0$, and $\mathcal{V}_l=\emptyset$;
            \For{\texttt{< $ h=1,\cdots,m_{l-1}$>}}
                \If{$CONV(\mathcal{R}_{l-1,h})$ contains at least one agent}
                    \State Assign mentee of $\mathcal{R}_{l-1,h}$, denoted by $\mu_{l-1,h}$;  
                    \State Define neighbors of $\mu_{l-1,h}$: $\mathcal{N}\left({\mu_{l-1,h}}\right)=\mathcal{R}_{l-1,h}$;
                    \State $N_l\leftarrow N_l+1$;
                    \State $m_l\leftarrow m_l+n+1$;
                    \State $\mathcal{V}_l=\mathcal{V}_l\cup \left\{\mu_{l-1,h}\right\}$;
                    \State Specify $\mathcal{R}_{l,(n+1)\left(N_l-1\right)+1}$, $\cdots$, and $\mathcal{R}_{l,(n+1)N_l}$
                \EndIf
            \EndFor 
            \State Obtain $\mathcal{L}_l$, and $\bar{\mathcal{L}}_l$.
            \State $l\leftarrow l+1$.
        \EndWhile    
        \State $M=l-1$.
  \end{algorithmic}
\end{algorithm}

For better clarification, Fig. \ref{CellDecomposition} shows how Algorithm \ref{euclid33} is implemented to specify the inter-agent communication based on the agent team reference configuration. As shown in Fig. \ref{CellDecomposition} (a), $CONV\left(\mathcal{R}_{0,h}\right)$ is a traingular cell that contains at least one agent for $h=1,\cdots,4$. Therefore, $m_1=12$ and $CONV\left(\mathcal{R}_{0,h}\right)$ is decomposed into three tringular cells shown in Fig. \ref{CellDecomposition} (a), each shown in blue. For layer $l=1$, $\mathcal{V}_1=\left\{11,10,6,5\right\}$ where $11$, $10$, $6$, and $5$ are mentors of $\mathcal{R}_{0,1}$, $\mathcal{R}_{0,2}$, $\mathcal{R}_{0,3}$, and $\mathcal{R}_{0,4}$, respectively. Also, $m_1=12$; 
 $\mathcal{R}_{1,1}=\left\{1,2,11\right\}$, $\mathcal{R}_{1,2}=\left\{2,12,11\right\}$, $\mathcal{R}_{1,3}=\left\{12,1,11\right\}$,  $\mathcal{R}_{1,4}=\left\{2,3,10\right\}$, $\mathcal{R}_{1,5}=\left\{3,12,10\right\}$,  $\mathcal{R}_{1,6}=\left\{12,2,10\right\}$,  $\mathcal{R}_{1,7}=\left\{3,4,6\right\}$,   $\mathcal{R}_{1,8}=\left\{4,12,6\right\}$,   $\mathcal{R}_{1,9}=\left\{12,3,6\right\}$,   $\mathcal{R}_{1,10}=\left\{4,1,5\right\}$,   $\mathcal{R}_{1,11}=\left\{1,12,5\right\}$,   and $\mathcal{R}_{1,12}=\left\{4,5,12\right\}$ define the verities of $12$ tringular cells shown in Fig. \ref{CellDecomposition} (b). As shown in Fig. \ref{CellDecomposition} (b), $CONV\left(\mathcal{R}_{1,h}\right)$ contains a single agent if $h=3,7,10,11$ and $CONV\left(\mathcal{R}_{1,h}\right)$ does not contain an agent otherwise. Therefore, $\mathcal{V}_2=\left\{7,8,9,13\right\}$, and $7$, $8$, $9$, and $13$ are mentees of $\mathcal{R}_{1,3}=\mathcal{N}(7)$, $\mathcal{R}_{1,7}=\mathcal{N}(8)$, $\mathcal{R}_{1,10}=\mathcal{N}(9)$, and $\mathcal{R}_{1,11}=\mathcal{N}({13})$ for $l=2$ (see Fig. \ref{CellDecomposition} (c)). Because $\bar{\mathcal{L}}=\emptyset$, the while loop of Algorithm \ref{euclid33} stops at $l=M=2$, and as a result, the DNN shown in Fig. \ref{CellDecomposition} (d) specifies the inter-agent communications.

\begin{figure*}[h]
\centering
\subfigure[]{\includegraphics[width=0.24\linewidth]{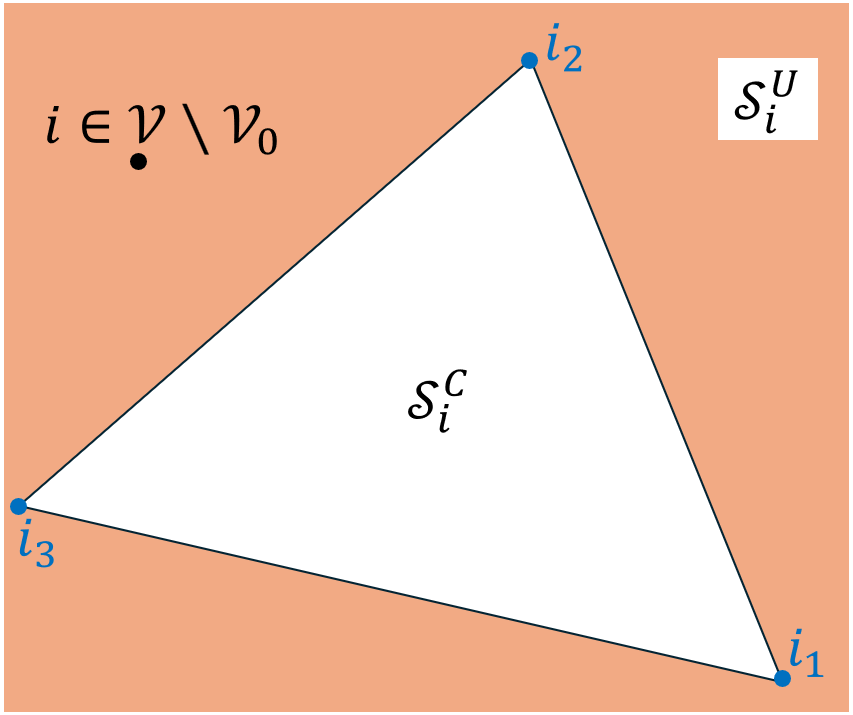}}
\subfigure[]{\includegraphics[width=0.24\linewidth]{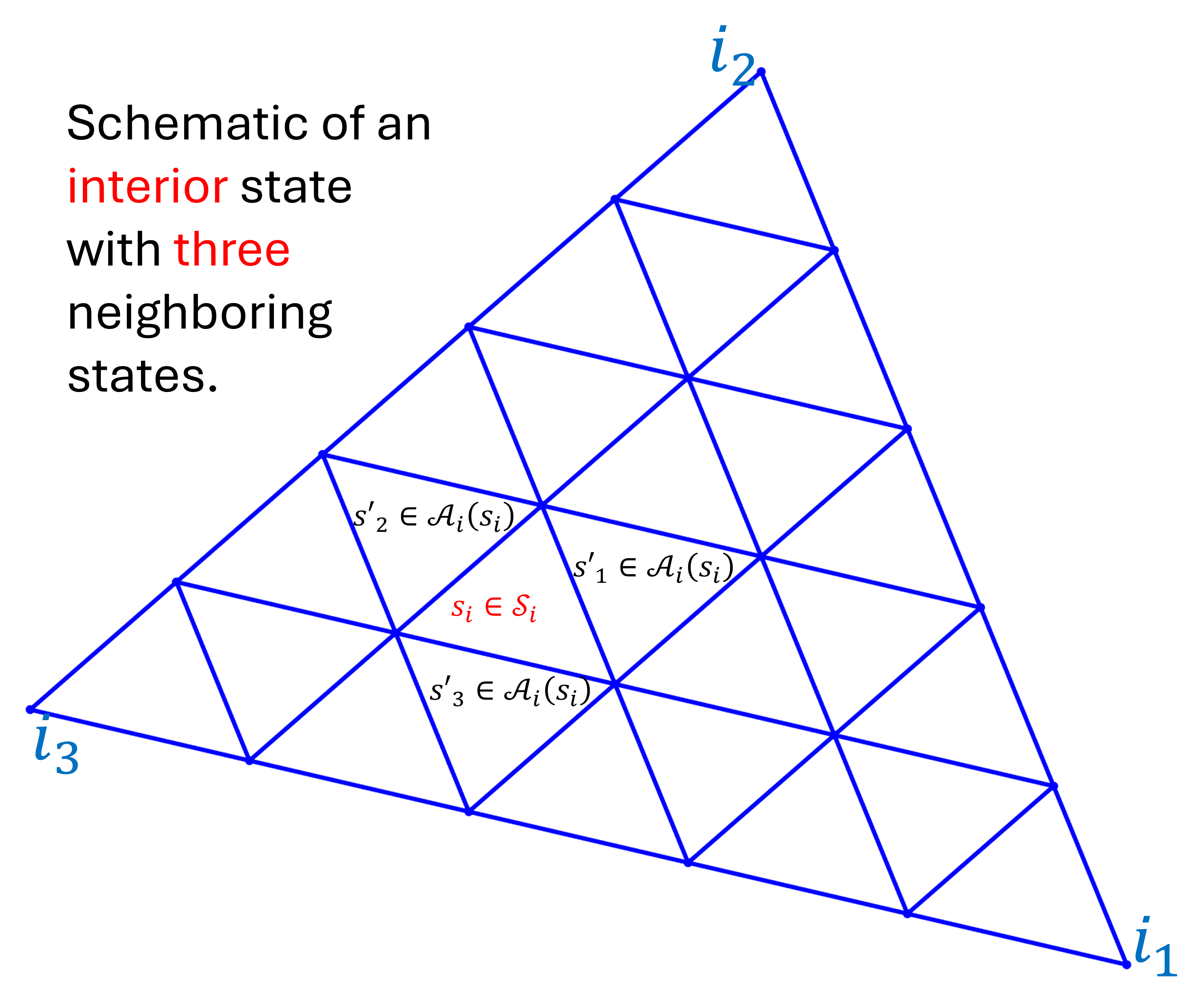}}
\subfigure[]{\includegraphics[width=0.24\linewidth]{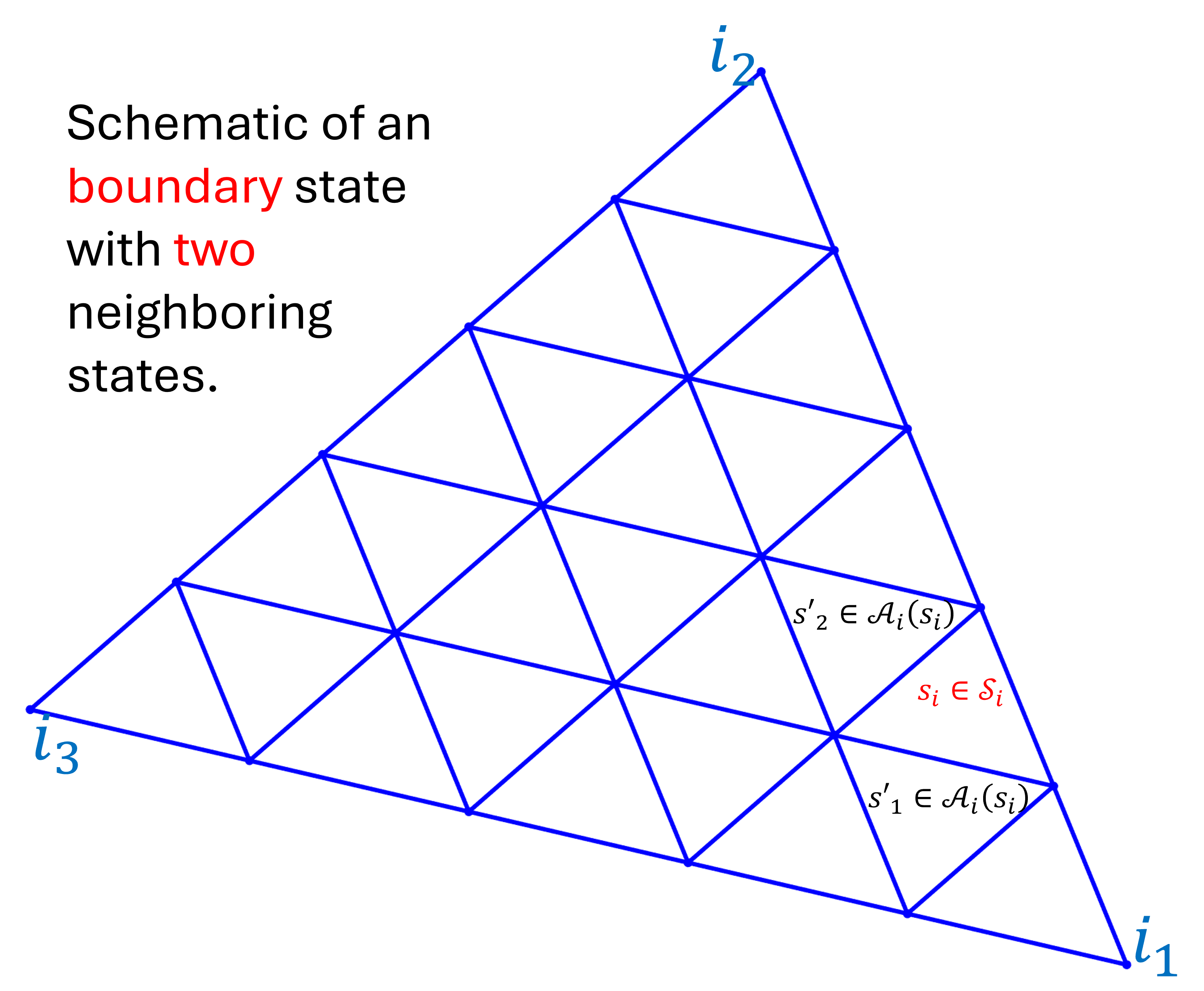}}
\subfigure[]{\includegraphics[width=0.24\linewidth]{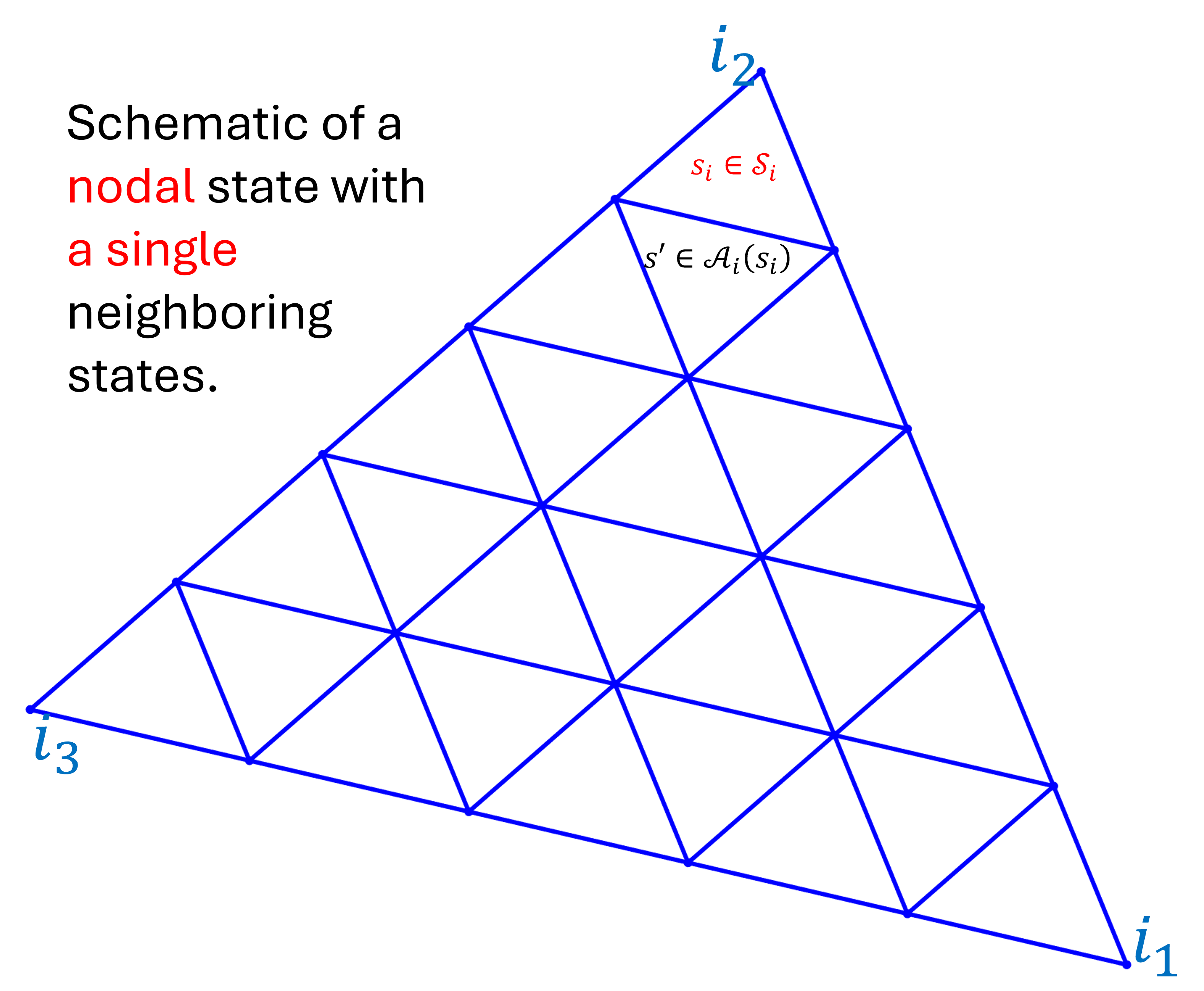}}
\vspace{-0.4cm}
\caption{(a) Schematic of $\mathcal{S}_i^U$, shown by light red,  defining a single state outside the communication triangle of agent $i\in \mathcal{V}\setminus \mathcal{V}_0$. Schematic illustration of (b) an interior state, (c) a boundary state, and (d) a nodal state $s\in\mathcal{S}_i^C$, having three, two, and one neighboring states.
}
\label{States-Neighbors}
\end{figure*}

\section{Training the DNN Weights}\label{Training the DNN Weights}

The DNN is trained in a \emph{fully decentralized and agent-centric} manner, wherein each agent
$i \in \mathcal{V}\setminus\mathcal{V}0$ independently optimizes its local communication strategy.
Specifically, agent~$i \in \mathcal{V}\setminus\mathcal{V}0$ assigns adaptive communication weights
$w_{i,j}\in\mathcal{W}_i$ to its in-neighbors $j\in\mathcal{N}(i)$ by solving a local MDP whose components and operation are described in Sections~\ref{MDP Components} and~\ref{Operation}, respectively.
This formulation enables scalable and communication-aware learning without centralized coordination, while guaranteeing coverage convergence via the theoretical results established in Section~\ref{Multi-Agent Coverage Dynamics and Control}.
\subsection{MDP Components}\label{MDP Components}
Agent~$i\in\mathcal{V}\setminus\mathcal{V}_0$ is associated with an MDP defined as
\[
\mathcal{M}_i
\big(
\mathcal{S}_i,\mathcal{D}_i,\mathcal{A}_i,g_i,
P_i,\mathcal{C}_i,
\gamma_i
\big),
\]
where the components of $\mathcal{M}_i$ are detailed below.

\noindent\textbf{State Set:}
The state space of agent~$i\in\mathcal{V}\setminus\mathcal{V}_0$ is defined by $\mathcal{S}_i$ and partitioned as
\begin{equation}
\mathcal{S}_i=\mathcal{S}_i^{C}\dot{\cup}\mathcal{S}_i^{U},
\end{equation}
where $\mathcal{S}_i^{C}$ and $\mathcal{S}_i^{U}$ denote the \emph{contained} and
\emph{uncontained} subspaces, respectively. For each agent $i\in\mathcal{V}\setminus\mathcal{V}_0$, the contained subspace
$\mathcal{S}_i^{C}$ is obtained by discretizing the communication triangle
$\mathcal{T}_i[t]$ formed by the instantaneous positions of its in-neighbors
$\mathcal{N}(i)$. Specifically,
\[
\mathcal{S}_i^{C}=\{s_1,\ldots,s_{M_i^2}\},
\]
where $M_i=\lvert\mathcal{W}_i\rvert$ (see~\eqref{DiscreteWeights}), yielding
$\lvert\mathcal{S}_i^{C}\rvert=M_i^2$. Each state $s\in\mathcal{S}_i^{C}$ corresponds
to a triangular cell with centroid
\begin{equation}
\mathbf{c}_i(s)=\sum_{j\in\mathcal{N}(i)} w_{i,j}(s)\mathbf{r}_j,\qquad s\in \mathcal{S}_i^C,
\end{equation}
where the barycentric weights $w_{i,j}\in\mathcal{W}_i$ satisfy~\eqref{weightcond}.
The uncontained subspace $\mathcal{S}_i^{U}$ is represented by a single aggregate
state capturing all positions outside $\mathcal{T}_i[t]$.
Fig.~\ref{States-Neighbors} illustrates the resulting discretization.

\noindent\textbf{Local Target Set:}
Let $\mathbf{d}_j:\mathcal{D}\rightarrow\mathbb{R}^2$ denote the position of
environmental target $j\in\mathcal{D}$. The set of targets locally observable by
agent~$i\in\mathcal{V}\setminus\mathcal{V}_0$ is defined as
\begin{equation}
\mathcal{D}_i[t]=\{j\in\mathcal{D}:\mathbf{d}_j[t]\in\mathcal{T}_i[t]\},
\qquad
\forall i\in\mathcal{V}\setminus\mathcal{V}_0.
\end{equation}

\noindent\textbf{Action Set:}
The action space $\mathcal{A}_i$ encodes admissible transitions over
$\mathcal{S}_i$ and is defined as a mapping
$\mathcal{A}_i:\mathcal{S}_i\rightarrow \mathcal{S}_i$.
Two triangular cells are considered neighbors if they share a common edge.
Accordingly, each state in $\mathcal{S}_i^{C}$ admits at most three neighboring
cells, resulting in at most four actions (including self-transition).
If $s\in\mathcal{S}_i^{C}$, then $\mathcal{A}_i(s)\subseteq\mathcal{S}_i^{C}$,
whereas actions from $s\in\mathcal{S}_i^{U}$ transition into the contained subspace.
Fig.~\ref{States-Neighbors}(b)--(d) illustrates this construction.

\noindent\textbf{Goal State:}
The goal state $g_i\in\mathcal{S}_i^{C}$ is selected to maximize the coverage
quality of the local target set $\mathcal{D}_i$. Define
\begin{equation}
{\mathbf{h}}_i[t]=
\begin{cases}
\dfrac{1}{|\mathcal{D}_i[t]|}\displaystyle\sum_{j\in\mathcal{D}_i[t]} \mathbf{d}_j[t],
& \mathcal{D}_i[t]\neq\emptyset,\\[8pt]
\dfrac{1}{3}\displaystyle\sum_{j\in\mathcal{N}(i)} \mathbf{r}_j[t],
& \mathcal{D}_i[t]=\emptyset,
\end{cases}
\end{equation}
which represents the centroid of the locally sensed targets when available, and
otherwise the centroid of the communication triangle.
The goal state $g_i$ is then defined as the unique triangular cell containing
${\mathbf{h}}_i[t]$.

\noindent\textbf{Transition Dynamics:}
The transition kernel is modeled as a linear combination of fixed base transition
measures. Let
$\Phi_i:\mathcal{S}_i\times\mathcal{A}_i\rightarrow\mathbb{R}^d$
denote a feature map, and let $\{\mu_{i,j}\}_{j=1}^d$ be graph-constrained base
transition distributions. The resulting transition kernel is
\begin{equation}\label{transitionke}
P_i(\cdot\mid s,a)
=
\sum_{j=1}^d \phi_{i,j}(s,a)\,\mu_{i,j}(\cdot),
\qquad\forall i\in\mathcal{V}\setminus\mathcal{V}_0.
\end{equation}

\noindent\textbf{Cost Function:}
Let $\bar{\mathbf{r}}(s)$ denote the centroid of the triangular cell associated with
state $s\in\mathcal{S}_i$. The MDP employs a state-dependent cost function defined as
\begin{equation}
\mathcal{C}_i(s)=
\begin{cases}
\alpha\big\|\bar{\mathbf{r}}(s)-\bar{\mathbf{r}}(g_i)\big\|, & s\neq g_i,\\[4pt]
\alpha\big\|\bar{\mathbf{r}}(s)-\bar{\mathbf{r}}(g_i)\big\|-\beta, & s=g_i,
\end{cases}
\end{equation}
where $\alpha>0$ and $\beta>0$ are design parameters. The term proportional to $\alpha$
penalizes deviation from the goal state $g_i$, while the terminal reward $\beta$
incentivizes reaching the goal.

\noindent\textbf{Discount Factor:}
The parameter $\gamma_i\in(0,1)$ denotes the discount factor.

\subsection{Operation}\label{Operation}
An agent $i\in \mathcal{V}\setminus \mathcal{V}_0$ may lie either inside or
outside its communication triangle $\mathcal{T}_i[t]$ at time $t$.
If the agent position satisfies $\mathbf{r}_i[t]\notin \mathcal{T}_i[t]$,
we set $M_i=1$. In this case, the corresponding MDP state satisfies
$s\in\mathcal{S}_i^{U}$, the contained subspace reduces to a singleton
$\lvert\mathcal{S}_i^{C}\rvert=1$, and the action mapping
$\mathcal{A}_i(s)=\mathcal{S}_i^{C}$ assigns a single admissible successor
state to the uncontained state. Consequently, the optimal action is trivial
and no Bellman recursion is required.

In contrast, when $s\in\mathcal{S}_i^{C}$, the admissible action set satisfies
$\mathcal{A}_i(s)\subseteq\mathcal{S}_i^{C}$. In this case, the optimal value
function is computed via the Bellman optimality equation
\begin{equation}\label{OptimalValue}
\resizebox{0.99\hsize}{!}{%
$
V_i^*(s)
=
\min\limits_{a\in \mathcal{A}_i(s)}
\left\{
\mathcal{C}_i(s,a)
+
\gamma
\sum\limits_{s'\in \mathcal{S}_i}
P_i(s'\mid s,a)\,V_i^*(s')
\right\},
\quad
s\in\mathcal{S}_i^{C},
$
}
\end{equation}
with the corresponding optimal policy given by
\begin{equation}\label{OptimalPolicy}
\resizebox{0.99\hsize}{!}{%
$
\pi_i^*(s)
=
\arg\min\limits_{a\in \mathcal{A}_i(s)}
\left\{
\mathcal{C}_i(s,a)
+
\gamma
\sum\limits_{s'\in \mathcal{S}_i}
P_i(s'\mid s,a)\,V_i^*(s')
\right\},
\quad
s\in\mathcal{S}_i^{C}.
$
}
\end{equation}

\section{Stability and  Convergence}\label{Multi-Agent Coverage Dynamics and Control}
In this section, we provide the proofs for the stablility and converegence of the proposed decentralized coverage method.
\begin{definition}
Let $\mathcal{V}=\{b_1,\ldots,b_N\}$. Define
\begin{equation}
\mathbf{y}
=\mathrm{vec}\!\left(
\begin{bmatrix}
\mathbf{r}_{b_1} & \cdots & \mathbf{r}_{b_N}
\end{bmatrix}^{\!T}
\right)
\in \mathbb{R}^{2N}.
\end{equation}
\end{definition}

\begin{definition}
Let $\mathcal{V}_l$ be defined as in \eqref{vl}. Define
\begin{equation}
\mathbf{y}_l
=\mathrm{vec}\!\left(
\begin{bmatrix}
\mathbf{r}_{b_{P_{l-1}+1}} & \cdots & \mathbf{r}_{b_{P_l}}
\end{bmatrix}^{\!T}
\right)
\in \mathbb{R}^{2N_l}.
\end{equation}
\end{definition}

\begin{assumption}\label{currentaoutside}
For any $s\in\mathcal{S}_i^{U}$, the target set $\mathcal{P}_i$ is a triangle \emph{strictly contained} in the communication triangle $\mathcal{T}_i(t)$ and \emph{edge-aligned} with $\mathcal{T}_i(t)$.
\end{assumption}

\begin{assumption}\label{currentinside}
For any $s\in\mathcal{S}_i^{C}$, the target set $\mathcal{P}_i$ is a triangle \emph{strictly contained} in the target triangle induced by the optimal next state $\pi_i^*(s)$ and \emph{edge-aligned} with it.
\end{assumption}

\begin{theorem}\label{thm:markov}
Assume each agent $b_i\in\mathcal{V}\setminus\mathcal{V}_0$ satisfies the AOC property and, in Assumption~\ref{assum1},
the target set $\mathcal{P}_{b_i}$ is replaced by the time-varying communication triangle $\mathcal{T}_{b_i}[t]$.
Then the coverage evolution satisfies
\begin{equation}\label{eq:markov_update}
\mathbf{y}[t+1]=\mathbf{\Gamma}[t]\mathbf{y}[t],
\end{equation}
where $\mathbf{\Gamma}[t]$ is row-stochastic for all $t$. Consequently, \eqref{eq:markov_update} defines a time-inhomogeneous Markov process.
\end{theorem}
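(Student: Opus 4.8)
The plan is to derive the update row by row, separating the anchor set $\mathcal{V}_0$ from the followers $\mathcal{V}\setminus\mathcal{V}_0$, and to show that every resulting row is a convex-combination vector, i.e.\ nonnegative with unit row sum. First I would fix an arbitrary follower $b_i\in\mathcal{V}\setminus\mathcal{V}_0$. By hypothesis $b_i$ is AOC and, with $\mathcal{P}_{b_i}$ replaced by $\mathcal{T}_{b_i}[t]$ and Assumption~\ref{assum1} forcing $T_{b_i}=1$, the reachability statement $\mathbf{r}_{b_i}[t+1]\in\mathcal{T}_{b_i}[t]$ holds. Since $\mathcal{T}_{b_i}[t]$ is by construction the triangle whose vertices are the instantaneous in-neighbor positions $\{\mathbf{r}_j[t]:j\in\mathcal{N}(b_i)\}$, membership in this triangle is precisely the statement that $\mathbf{r}_{b_i}[t+1]$ admits a barycentric representation
\[
\mathbf{r}_{b_i}[t+1]=\sum_{j\in\mathcal{N}(b_i)}\lambda_{i,j}[t]\,\mathbf{r}_j[t],
\qquad
\lambda_{i,j}[t]\ge 0,\quad \sum_{j\in\mathcal{N}(b_i)}\lambda_{i,j}[t]=1 .
\]
This produces one nonnegative, unit-sum row whose support is confined to the three in-neighbors of $b_i$.

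For the anchors $b_i\in\mathcal{V}_0$ I would invoke the standing convention that the reference input $\mathbf{c}_{b_i}[t]$ is constant, so the anchor holds its position and $\mathbf{r}_{b_i}[t+1]=\mathbf{r}_{b_i}[t]$; this contributes an identity row (a single unit entry), trivially nonnegative with unit sum. Collecting the two cases, I would assemble the $N\times N$ coefficient matrix $\mathbf{\Lambda}[t]$ whose $b_i$-th row is either the barycentric vector above or the corresponding standard basis vector. Because the positions live in $\mathbb{R}^2$ and the same coefficients act identically on the $x$- and $y$-components, the lifted operator on $\mathbf{y}\in\mathbb{R}^{2N}$, respecting the ordering induced by $\mathrm{vec}$, is $\mathbf{\Gamma}[t]=I_2\otimes\mathbf{\Lambda}[t]$, which verifies \eqref{eq:markov_update}. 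Each row of $\mathbf{\Gamma}[t]$ is a row of $\mathbf{\Lambda}[t]$ padded with zeros, hence nonnegative with unit sum, establishing row-stochasticity for every $t$. Finally, since $\mathbf{\Gamma}[t]$ depends on $t$ through the realized in-neighbor positions, a sequence of row-stochastic matrices acting on the state is by definition a time-inhomogeneous Markov process, giving the claimed conclusion.

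The main obstacle I anticipate is making the barycentric step fully rigorous rather than merely geometric. Membership in the triangle guarantees the \emph{existence} of nonnegative coordinates $\lambda_{i,j}[t]$ summing to one, but their \emph{uniqueness} requires $\mathcal{T}_{b_i}[t]$ to be non-degenerate (affinely independent vertices); I would therefore either assume the communication triangles remain non-degenerate along the trajectory or simply note that only existence of a nonnegative unit-sum representation is needed for row-stochasticity, so degenerate instances are harmless. A secondary point worth a line of care is that these convex coordinates are the \emph{realized} interpolation weights rather than the design weights $w_{i,j}\in\mathcal{W}_i$; the theorem asserts only stochasticity of $\mathbf{\Gamma}[t]$, so the distinction does not affect the argument but should be stated explicitly to avoid conflating the policy-selected target cell with the final barycentric coefficients.
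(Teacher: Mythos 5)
Your proposal is correct and follows essentially the same route as the paper's own proof: anchors contribute identity rows, and each follower row is a convex-combination vector obtained from $\mathbf{r}_{b_i}[t+1]\in\mathcal{T}_{b_i}[t]$, which yields row-stochasticity of $\mathbf{\Gamma}[t]$. In fact you are more explicit than the paper on the key step (the paper merely asserts the convex-combination form from AOC, whereas you justify it via barycentric coordinates and note that only existence, not uniqueness, is needed) and on the $I_2\otimes\mathbf{\Lambda}[t]$ lifting to $\mathbb{R}^{2N}$, both of which are worthwhile clarifications.
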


\begin{proof}
For each anchored agent $b_i\in\mathcal{V}_0$, $\mathbf{r}_{b_i}[t]=\mathbf{p}_{b_i}$ for all $t$.
For any $b_i\in\mathcal{V}\setminus\mathcal{V}_0$, the AOC property and the transition rules in Section~\ref{Operation} imply
\[
\begin{split}
    \mathbf{r}_{b_i}[t+1]
=&\sum_{b_j\in\mathcal{N}(b_i)}\alpha_{b_i,b_j}[t]\mathbf{r}_{b_j}[t],\\
\alpha_{b_i,b_j}[t]\ge&0,\\
\sum_{b_j\in\mathcal{N}(b_i)}\alpha_{b_i,b_j}[t]=&1.
\end{split}
\]
Stacking all agent positions yields \eqref{eq:markov_update} with entries
\[
\Gamma_{i,j}[t]=
\begin{cases}
1, & b_i\in\mathcal{V}_0,\ i=j,\\
\alpha_{b_i,b_j}[t], & b_i\in\mathcal{V}\setminus\mathcal{V}_0,\ b_j\in\mathcal{N}(b_i),\\
0, & \text{otherwise}.
\end{cases}
\]
Each row of $\mathbf{\Gamma}[t]$ is nonnegative and sums to one; hence $\mathbf{\Gamma}[t]$ is row-stochastic.
\end{proof}

\begin{theorem}\label{thm:anchored_stability}
Consider $\mathbf{y}[t+1]=\mathbf{\Gamma}[t]\mathbf{y}[t]$, where each $\mathbf{\Gamma}[t]\in\mathbb{R}^{N\times N}$ is row-stochastic.
Let $\mathcal{V}_0$ denote anchored agents satisfying $\mathbf{r}_{b_i}[t]\equiv\mathbf{p}_{b_i}$ for all $b_i\in\mathcal{V}_0$ and all $t$.
After reordering agents, write
\begin{equation}\label{eq:block_partition}
\mathbf{\Gamma}[t]=
\begin{bmatrix}
\mathbf{I} & \mathbf{0}\\
\mathbf{B}[t] & \mathbf{A}[t]
\end{bmatrix},
\qquad
\mathbf{y}[t]=
\begin{bmatrix}
\mathbf{p}\\
\mathbf{y}_F[t]
\end{bmatrix}.
\end{equation}
Assume there exist $T\ge1$ and $\eta\in(0,1)$ such that:
\begin{enumerate}
\item[(C1)] Assumptions~\ref{currentaoutside}--\ref{currentinside} hold.
\item[(C2)] For every $t$ and every follower index $i$,
\begin{equation}\label{eq:reachability}
\sum_{j\in\mathcal{V}_0}
\big(\mathbf{\Gamma}[t+T-1]\cdots\mathbf{\Gamma}[t]\big)_{ij}
\ge \eta .
\end{equation}
\end{enumerate}
Then the follower subsystem
\begin{equation}\label{eq:follower_update}
\mathbf{y}_F[t+1]=\mathbf{A}[t]\mathbf{y}_F[t]+\mathbf{B}[t]\mathbf{p}
\end{equation}
is globally exponentially stable: for all $t\ge s$,
\begin{equation}\label{eq:exp_contraction}
\|\mathbf{\Phi}(t,s)\|_\infty
=\|\mathbf{A}[t-1]\cdots\mathbf{A}[s]\|_\infty
\le (1-\eta)^{\left\lfloor\frac{t-s}{T}\right\rfloor}.
\end{equation}
Moreover, $\mathbf{y}_F[t]$ converges, and each follower coordinate converges to a convex combination of the anchors' coordinates.
\end{theorem}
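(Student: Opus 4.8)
The plan is to reduce everything to a single infinity-norm contraction estimate on the follower-to-follower transition matrices over windows of length $T$, and then to read off both the exponential bound \eqref{eq:exp_contraction} and the limiting behavior. First I would exploit the block-triangular structure in \eqref{eq:block_partition}. Multiplying two matrices of this form,
\begin{equation*}
\begin{bmatrix}\mathbf{I} & \mathbf{0}\\ \mathbf{B}_2 & \mathbf{A}_2\end{bmatrix}\begin{bmatrix}\mathbf{I} & \mathbf{0}\\ \mathbf{B}_1 & \mathbf{A}_1\end{bmatrix}=\begin{bmatrix}\mathbf{I} & \mathbf{0}\\ \mathbf{B}_2+\mathbf{A}_2\mathbf{B}_1 & \mathbf{A}_2\mathbf{A}_1\end{bmatrix},
\end{equation*}
shows the structure is preserved and that the lower-right block of the window product $\mathbf{\Gamma}[t+T-1]\cdots\mathbf{\Gamma}[t]$ is exactly $\mathbf{\Phi}(t+T,t)=\mathbf{A}[t+T-1]\cdots\mathbf{A}[t]$. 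Since a product of row-stochastic matrices is row-stochastic, this window product is row-stochastic, so each follower row splits into an anchor part (the lower-left block) plus a follower part (the entries of $\mathbf{\Phi}(t+T,t)$) summing to one. Condition (C2) forces the anchor part of each follower row to carry mass at least $\eta$, whence the nonnegative follower part has row sum at most $1-\eta$, i.e. $\|\mathbf{\Phi}(t+T,t)\|_\infty\le 1-\eta$.

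Next I would compose windows. Writing $t-s=qT+r$ with $q=\lfloor(t-s)/T\rfloor$ and $0\le r<T$, I group $\mathbf{A}[t-1]\cdots\mathbf{A}[s]$ into $q$ complete windows, each of infinity-norm at most $1-\eta$, and one leftover block of fewer than $T$ factors. Because each single $\mathbf{A}[k]$ is a nonnegative submatrix of a row-stochastic matrix, its row sums are at most one, so $\|\mathbf{A}[k]\|_\infty\le 1$ and the leftover block contributes a factor at most $1$. Submultiplicativity of $\|\cdot\|_\infty$ then yields \eqref{eq:exp_contraction}, which is precisely global exponential stability of the homogeneous follower recursion and in particular $\|\mathbf{\Phi}(t,s)\|_\infty\to0$.

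For the convergence claim I would use the variation-of-constants form $\mathbf{y}_F[t]=\mathbf{\Phi}(t,0)\mathbf{y}_F[0]+\sum_{k=0}^{t-1}\mathbf{\Phi}(t,k+1)\mathbf{B}[k]\mathbf{p}$, in which the homogeneous term vanishes by \eqref{eq:exp_contraction}, so the initial condition is forgotten. To locate the limit I would apply the discrete maximum principle coordinatewise: because $\mathbf{\Gamma}[t]$ is row-stochastic with the anchors frozen, $\max_i y_i[t]$ is nonincreasing and $\min_i y_i[t]$ is nondecreasing, and the same window estimate gives $\max_i y_i[t+T]-\bar p\le(1-\eta)\big(\max_i y_i[t]-\bar p\big)$, with $\bar p$ the largest anchor coordinate, so every follower coordinate is asymptotically trapped in the convex hull of the anchor coordinates.

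The hard part — and the place where (C1) is essential — is upgrading this confinement to convergence toward a \emph{single} convex combination, since for genuinely time-varying forcing $\mathbf{B}[k]\mathbf{p}$ the forced response need not settle (even with $\mathbf{\Phi}(t,s)\to0$, an oscillating $\mathbf{B}[k]$ produces an oscillating $\mathbf{y}_F$). I would close this gap using Assumptions~\ref{currentaoutside}--\ref{currentinside}, which ensure the policy steers each follower into a fixed goal cell so that the weight matrices become asymptotically stationary, $\mathbf{A}[t]\to\mathbf{A}_\infty$ and $\mathbf{B}[t]\to\mathbf{B}_\infty$. Passing the window bound to the limit gives $\|\mathbf{A}_\infty^{\,T}\|_\infty\le 1-\eta$, hence $\rho(\mathbf{A}_\infty)<1$, so the asymptotically autonomous recursion converges to the fixed point $(\mathbf{I}-\mathbf{A}_\infty)^{-1}\mathbf{B}_\infty\mathbf{p}$; row-stochasticity of $[\mathbf{B}_\infty\ \mathbf{A}_\infty]$ makes $(\mathbf{I}-\mathbf{A}_\infty)^{-1}\mathbf{B}_\infty$ a nonnegative matrix with unit row sums, exhibiting each follower limit as the desired convex combination of the anchors.
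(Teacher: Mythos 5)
Your proof of the contraction bound coincides with the paper's own: the same block--triangular composition showing the lower-right block of the window product $\mathbf{\Gamma}[t+T-1]\cdots\mathbf{\Gamma}[t]$ equals $\mathbf{A}[t+T-1]\cdots\mathbf{A}[t]$, the same use of row-stochasticity plus (C2) to force each follower row sum of that block below $1-\eta$, and the same submultiplicativity step to get \eqref{eq:exp_contraction} (you are in fact slightly more careful, handling the leftover $r<T$ factors explicitly, which the paper leaves implicit). Where you genuinely depart from the paper is the final convergence claim, and your departure is substantive. The paper unrolls \eqref{eq:follower_update} into $\mathbf{y}_F[t]=\mathbf{\Phi}(t,0)\mathbf{y}_F[0]+\sum_{\tau=0}^{t-1}\mathbf{\Phi}(t,\tau+1)\mathbf{B}[\tau]\mathbf{p}$ and asserts that geometric decay of $\|\mathbf{\Phi}(t,\tau+1)\|_\infty$ makes ``the series converge''; but the summands depend on $t$, so this argument only delivers uniform boundedness of $\mathbf{y}_F[t]$, not convergence --- precisely the obstruction you identify. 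Your implicit counterexample is valid: with two anchors $p_1\neq p_2$, one follower, $\mathbf{A}[t]\equiv 0$ and $\mathbf{B}[t]$ alternating between $\begin{bmatrix}1 & 0\end{bmatrix}$ and $\begin{bmatrix}0 & 1\end{bmatrix}$, every hypothesis used in that step of the paper's proof holds (row-stochasticity, contraction with $\eta=1$), yet $\mathbf{y}_F[t]$ oscillates forever. Your repair --- maximum principle for asymptotic confinement to the anchors' convex hull, then asymptotic stationarity $\mathbf{A}[t]\to\mathbf{A}_\infty$, $\mathbf{B}[t]\to\mathbf{B}_\infty$ extracted from (C1), then identification of the limit with the fixed point $(\mathbf{I}-\mathbf{A}_\infty)^{-1}\mathbf{B}_\infty\mathbf{p}$, whose row sums are one by $(\mathbf{I}-\mathbf{A}_\infty)\mathbf{1}=\mathbf{B}_\infty\mathbf{1}$ and whose nonnegativity follows from the Neumann series --- is the right shape of argument, and it is also where (C1) actually earns its place among the hypotheses: the paper lists Assumptions~\ref{currentaoutside}--\ref{currentinside} as condition (C1) but never invokes them in its proof. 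The one soft spot on your side is the stationarity step itself: those assumptions constrain the geometry of the target sets, but a follower's goal cell depends on its in-neighbors' moving positions, so $\mathbf{A}[t]\to\mathbf{A}_\infty$ is not an immediate consequence; making it rigorous requires the layer-by-layer induction that the paper defers to its subsequent theorem. Net assessment: same core estimate, but your treatment of the limit is more honest about what the contraction bound does and does not give, and it patches a real gap in the paper's own final step.
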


\begin{proof}
From \eqref{eq:block_partition} and anchor invariance, the follower dynamics are \eqref{eq:follower_update}.
Row-stochasticity of $\mathbf{\Gamma}[t]$ implies $\mathbf{A}[t]\ge0$ and $\mathbf{A}[t]\mathbf{1}\le\mathbf{1}$, i.e., $\mathbf{A}[t]$ is substochastic.

Define the $T$-step product
\[
\begin{split}
    \mathbf{M}[t]=&\mathbf{\Gamma}[t+T-1]\cdots\mathbf{\Gamma}[t]
=
\begin{bmatrix}
\mathbf{I} & \mathbf{0}\\
\mathbf{B}_T[t] & \mathbf{A}_T[t]
\end{bmatrix},\\
\qquad
\mathbf{A}_T[t]=&\mathbf{A}[t+T-1]\cdots\mathbf{A}[t].
\end{split}
\]
Since $\mathbf{M}[t]$ is row-stochastic, for each follower row $i\in \mathcal{V}\setminus \mathcal{V}_0$,
\[
\sum_{j\in\mathcal{V}_0}(\mathbf{M}[t])_{ij}+\sum_{\ell}(\mathbf{A}_T[t])_{i\ell}=1.
\]
By (C2), $\sum_{j\in\mathcal{V}_0}(\mathbf{M}[t])_{ij}\ge\eta$, hence $\sum_{\ell}(\mathbf{A}_T[t])_{i\ell}\le1-\eta$.
Therefore,
\begin{equation}\label{eq:AT_contract}
\|\mathbf{A}_T[t]\|_\infty=\max_i\sum_{\ell}(\mathbf{A}_T[t])_{i\ell}\le 1-\eta.
\end{equation}

Let $\mathbf{\Phi}(t,s)=\mathbf{A}[t-1]\cdots\mathbf{A}[s]$. Grouping the product into blocks of length $T$ and using submultiplicativity of $\|\cdot\|_\infty$ with \eqref{eq:AT_contract} gives \eqref{eq:exp_contraction}, proving exponential contraction of the homogeneous system.

Unrolling \eqref{eq:follower_update} yields
\[
\mathbf{y}_F[t]=\mathbf{\Phi}(t,0)\mathbf{y}_F[0]+\sum_{\tau=0}^{t-1}\mathbf{\Phi}(t,\tau+1)\mathbf{B}[\tau]\mathbf{p}.
\]
Because rows of $[\mathbf{B}[t]\ \mathbf{A}[t]]$ sum to $1$, $\|\mathbf{B}[t]\|_\infty\le1$, and \eqref{eq:exp_contraction} implies $\|\mathbf{\Phi}(t,\tau+1)\|_\infty$ decays geometrically, the series converges; hence $\mathbf{y}_F[t]$ converges.

Finally, each follower update is a convex combination of neighbor states and fixed anchors, so each follower coordinate remains in the convex hull of the anchors (and the shrinking follower contribution), and the limit is a convex combination of the anchors' coordinates.
\end{proof}

\begin{figure}
    \centering
    \includegraphics[width=0.48\textwidth]{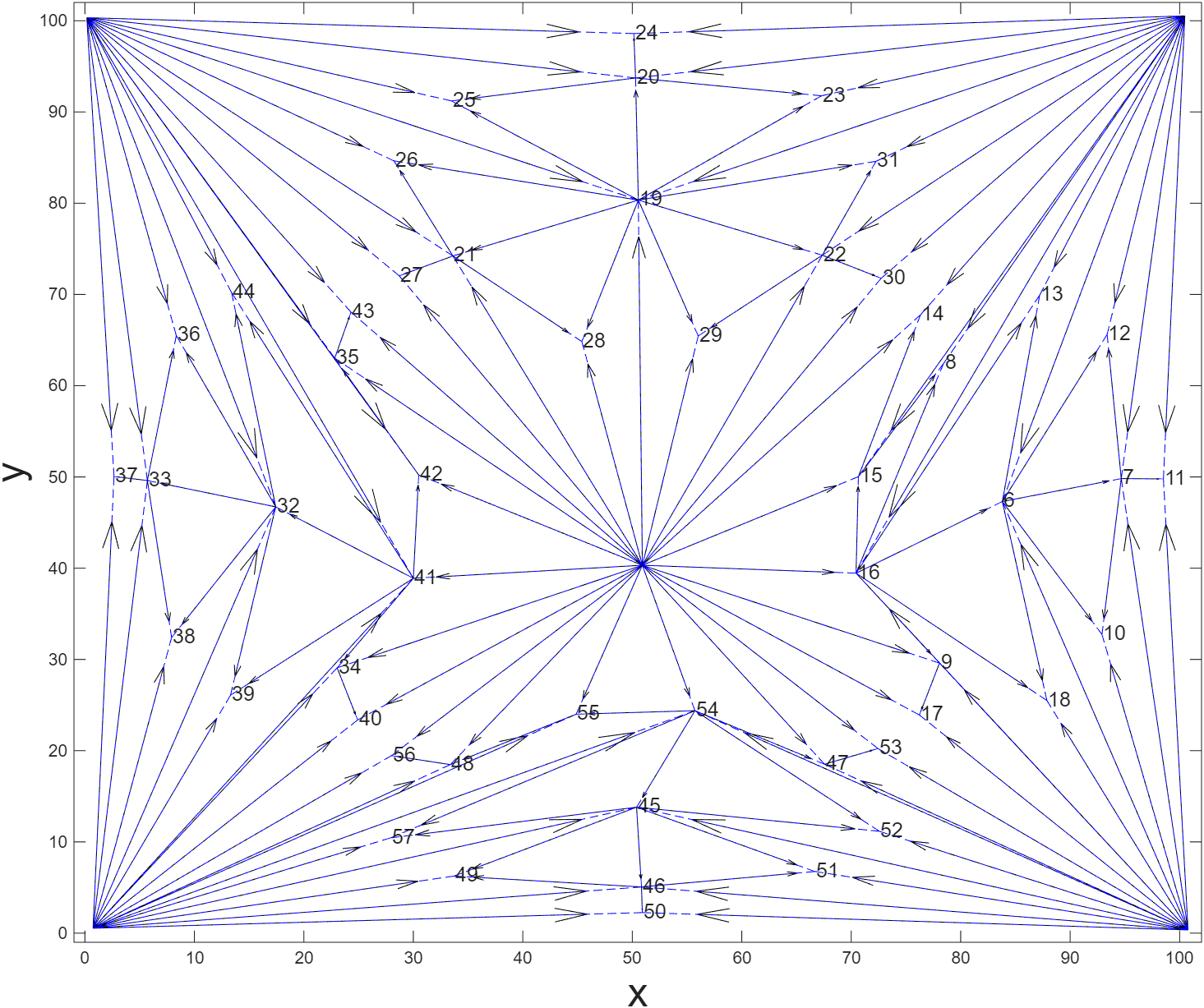}
    \vspace{-0.25cm}
    \caption{The initial formation of the agent team and the communication links.}
    \label{InitialFormation}
\end{figure}
\begin{figure*}
    \centering
    \includegraphics[width=0.98\textwidth]{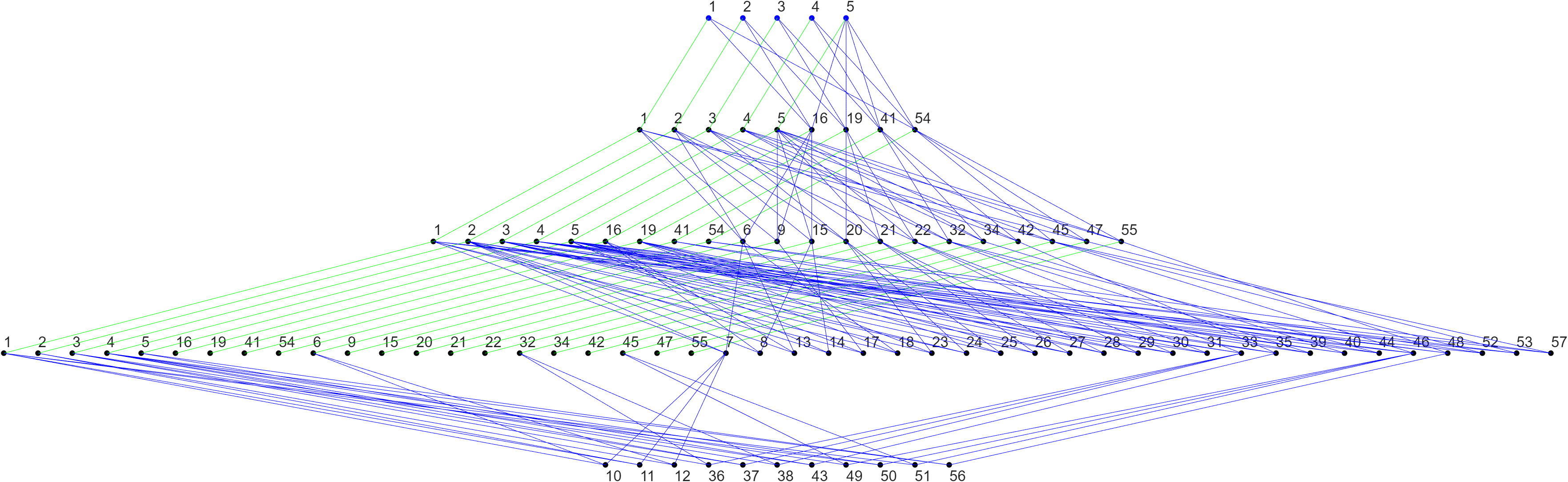}
    \vspace{-0.25cm}
    \caption{The DNN structure consistent with the agents' initial formation in Fig. \ref{InitialFormation}.}
    \label{DNNStructure}
\end{figure*}

\begin{figure}
    \centering
    \includegraphics[width=0.48\textwidth]{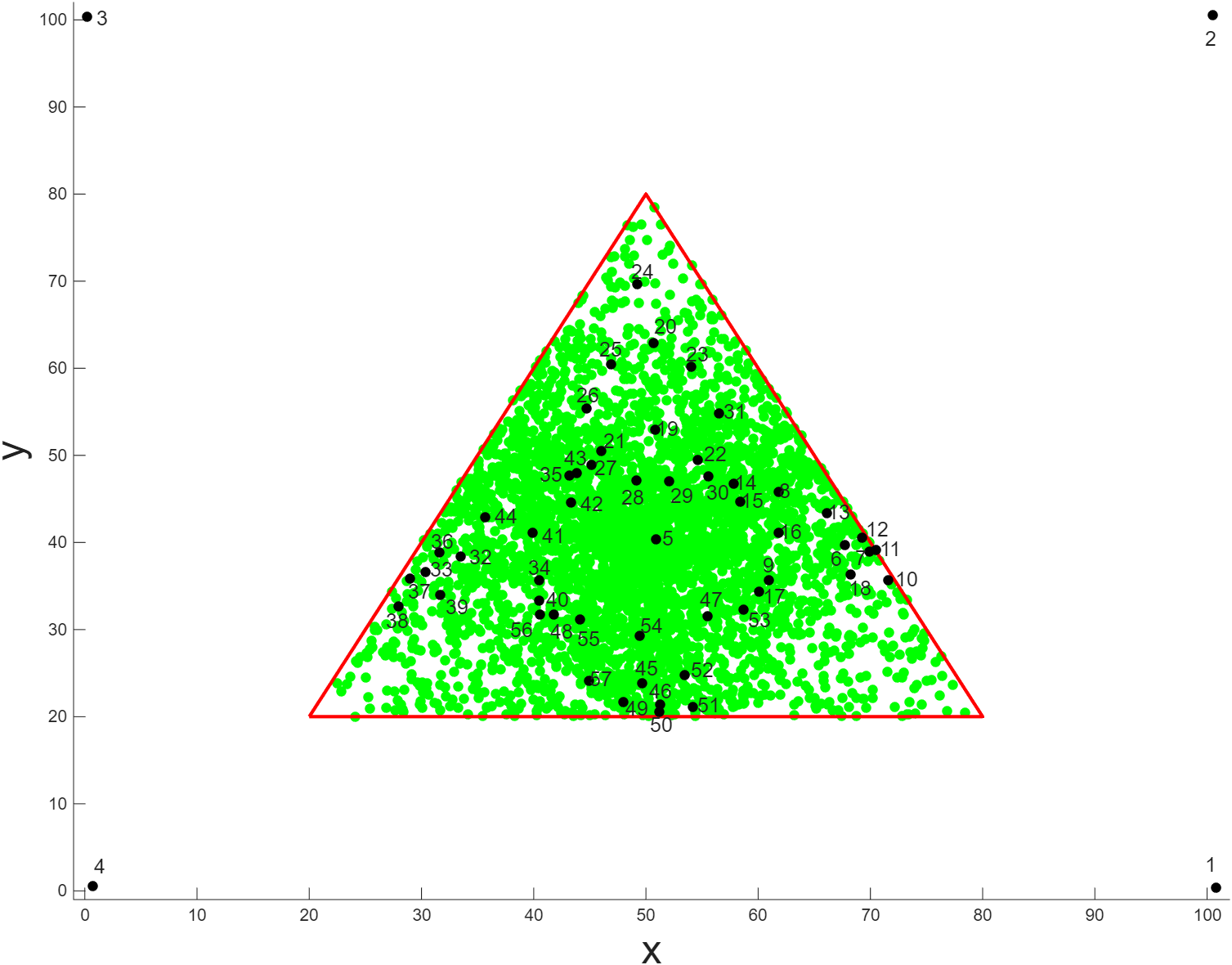}
    \vspace{-0.25cm}
    \caption{The initial formation of the agent team and the communication links.}
    \label{DesiredFormation}
\end{figure}

To establish asymptotic convergence, we introduce $\mathbf{p}_i$ as the desired position of each agent $i\in\mathcal{V}$, which is constant when the target set $\mathcal{D}$ is stationary. The desired positions $\mathbf{p}_i$ are known to all anchored agents $i\in\mathcal{V}_0$. In contrast, the desired positions of follower agents $i\in\mathcal{V}\setminus\mathcal{V}_0$ are not locally available to them. Nevertheless, the quantities $\mathbf{p}_i$ are used solely as analytical constructs to characterize the decentralized convergence of the follower dynamics.

\begin{definition}[Desired communication triangle]
For any agent $b_i\in\mathcal{V}\setminus\mathcal{V}_0$, let
$\mathcal{N}(b_i)=\{b_{i_1}, b_{i_2}, b_{i_3}\}$ denote its in-neighbor set.
The desired communication triangle of $b_i$ is defined as
\begin{equation}\label{tbi}
\tilde{\mathcal{T}}_{b_i}\triangleq
\mathrm{conv}\big\{\mathbf{p}_{b_{i_1}},\,\mathbf{p}_{b_{i_2}},\,\mathbf{p}_{b_{i_3}}\big\},
\end{equation}
i.e., the convex hull of the in-neighbor positions.
\end{definition}

\begin{definition}[Induced target subset]
Given the environmental target set $\mathcal{D}$, the subset of targets covered by
$\tilde{\mathcal{T}}_{b_i}$ is defined as
\begin{equation}\label{dbi}
\tilde{\mathcal{D}}_{b_i}\triangleq
\big\{j\in\mathcal{D}:\ \mathbf{d}_j\in\tilde{\mathcal{T}}_{b_i}\big\}.
\end{equation}
\end{definition}
The desired position of every agent $b_i\in \mathcal{V}\setminus \mathcal{V}_0$ is obtained by
\begin{equation}\label{eq:desired_p}
\tilde{\mathbf{h}}_{b_i}=
\begin{cases}
\dfrac{1}{\left|\tilde{\mathcal{D}}_{b_i}\right|}
\displaystyle\sum_{j\in\tilde{\mathcal{D}}_{b_i}} \mathbf{d}_j[t],
& \tilde{\mathcal{D}}_{b_i}\neq\emptyset,\\[8pt]
\dfrac{1}{3}\displaystyle\sum_{b_j\in\mathcal{N}(b_i)} \mathbf{p}_{b_j},
& \tilde{\mathcal{D}}_{b_i}=\emptyset.
\end{cases}
\end{equation}

Algorithm~\ref{desiredpositions} provides an abstract representation of the environmental target set $\mathcal{D}$ by assigning $N$ desired positions $\mathbf{p}_{b_1},\ldots,\mathbf{p}_{b_N}$ to the agent set.
\begin{definition}
Let $\mathcal{V}=\{b_1,\ldots,b_N\}$. Define
\begin{equation}
\mathbf{z}
=\mathrm{vec}\!\left(
\begin{bmatrix}
\mathbf{p}_{b_1} & \cdots & \mathbf{p}_{b_N}
\end{bmatrix}^{\!T}
\right)
\in \mathbb{R}^{2N}.
\end{equation}
\end{definition}

\begin{definition}
Let $\mathcal{V}_l$ be defined as in \eqref{vl}. Define
\begin{equation}
\mathbf{z}_l
=\mathrm{vec}\!\left(
\begin{bmatrix}
\mathbf{p}_{b_{P_{l-1}+1}} & \cdots & \mathbf{p}_{b_{P_l}}
\end{bmatrix}^{\!T}
\right)
\in \mathbb{R}^{2N_l}.
\end{equation}
\end{definition}

\begin{definition}
For each agent $i\in\mathcal{V}\setminus \mathcal{V}_0$, let $\mathcal{N}(i)$ denote its set of communication in-neighbors, and let $\tilde{\mathcal{T}}_{i}$ denote the associated desired communication triangle.
The center of the goal state $\tilde{g}_{i}$, corresponding to a cell enclosing $\tilde{\mathbf{h}}_i$, defines ${\mathbf{p}}_i$ and is expressed as the convex combination
\begin{equation}\label{desiredweight}
{\mathbf{p}}_i
=\sum_{j\in \mathcal{N}(i)}\tilde{w}_{i,j}\mathbf{p}_{j},\qquad \forall i\in \mathcal{V}\setminus \mathcal{V}_0,
\end{equation}
where $\tilde{w}_{i,j}\in \mathcal{W}_i$.
\end{definition}

\begin{definition}
For each agent $b_i\in\mathcal{V}$, let $\mathcal{N}(b_i)=\{b_{i_1}, b_{i_2}, b_{i_3}\}$ denote its set of communication in-neighbors, and let
\[
{\mathcal{T}}_{b_i}[t]=\mathrm{conv}\big\{\mathbf{r}_{b_{i_1}}[t],\,\mathbf{r}_{b_{i_2}}[t],\,\mathbf{r}_{b_{i_3}}[t]\big\}\]
denote the communication triangle.
The center of the goal state ${g}_{b_i}\in\mathcal{S}_{b_i}$, corresponding to a cell enclosing ${\mathbf{h}}_{b_i}$, is denoted by $\hat{\mathbf{c}}_i$ and expressed as the convex combination
\begin{equation}\label{desiredweight}
{\mathbf{r}}_{b_i}
=\sum_{j\in \mathcal{N}\left(b_i\right)}\hat{w}_{b_i,j}{\mathbf{r}}_{j}
\end{equation}
where $\hat{w}_{b_i,j}\in \mathcal{W}_i$
\end{definition}

\begin{definition}
    We define $\tilde{\mathbf{\Gamma}}=\left[\tilde{\Gamma}_{i,j}\right]$ with the $(i,j)$ entry
    \begin{equation}
        \tilde{\Gamma}_{i,j}[t]=\begin{cases}
1, & b_i\in\mathcal{V}_0,\ i=j,\\
\tilde{w}_{b_i,b_j}[t], & b_i\in\mathcal{V}\setminus\mathcal{V}_0,\ b_j\in\mathcal{N}(b_i),\\
0, & \text{otherwise}.
\end{cases}
    \end{equation}
\end{definition}
\begin{definition}
    We define $\hat{\mathbf{\Gamma}}=\left[\hat{\Gamma}_{i,j}\right]$ with the $(i,j)$ entry
    \begin{equation}
        \hat{\Gamma}_{i,j}[t]=\begin{cases}
1, & b_i\in\mathcal{V}_0,\ i=j,\\
\hat{w}_{b_i,b_j}[t], & b_i\in\mathcal{V}\setminus\mathcal{V}_0,\ b_j\in\mathcal{N}(b_i),\\
0, & \text{otherwise}.
\end{cases}
    \end{equation}
\end{definition}
\begin{algorithm}
  \caption{Environmental Target Representation by $N$ points}\label{desiredpositions}
  \begin{algorithmic}[1]
        \State \textit{Get:} Target set $\mathcal{D}$ and reference position of $\mathcal{V}_0$'s agents, denoted by $\mathbf{a}_{b_1}$ through $\mathbf{a}_{N_0}$, the DNN structure. 
        \State \textit{Obtain:} Agents' desired positions $\mathbf{p}_{b_1}$ through $\mathbf{p}_{b_N}$.
        \For{\texttt{<$ l=0,\cdots,M$>}}
            \If{$l=0$}
                \For{\texttt{<$i=1,\cdots,N_0$>}}
                    \State $\mathbf{p}_{b_i}=\mathbf{a}_{b_i}$.
                \EndFor 
            \Else
                \For{\texttt{<$i=P_{l-1},\cdots,P_l$>}}
                    \State Assign $\tilde{\mathcal{T}}_{b_i}$,   by \eqref{tbi}, and $\tilde{\mathcal{D}}_{b_i}$, by \eqref{dbi}.
                    \State Assign $\mathbf{p}_{b_i}$ by \eqref{eq:desired_p}.
                \EndFor 
            \EndIf
        \EndFor         
  \end{algorithmic}
\end{algorithm}
Matrices $\mathbf{\Gamma}$, $\hat{\mathbf{\Gamma}}$, and $\tilde{\mathbf{\Gamma}}$
share the same strictly lower block–triangular structure with an identity block
in the $(0,0)$ position and a zero last block column. Specifically, for
$l,h\in\mathcal{M}$,
\begin{equation}
\Gamma_{l,h} =
\begin{cases}
\mathbf{I}, & l=h=0,\\
\Gamma_{l,h}, & 0\le h < l \le M,\\
\mathbf{0}, & \text{otherwise},
\end{cases}
\end{equation}
where $\Gamma\in\{\mathbf{\Gamma},\hat{\mathbf{\Gamma}},\tilde{\mathbf{\Gamma}}\}$.











\begin{proposition}
Given $\mathbf{z}_0$, the desired configuration of the agent team satisfies
\begin{equation}\label{zl}
\mathbf{z}_l
=\sum_{h=0}^{l-1}\tilde{\mathbf{\Gamma}}_{l,h}\mathbf{z}_h,
\qquad
\forall\, l\in \mathcal{M}\setminus\{0\}.
\end{equation}
\end{proposition}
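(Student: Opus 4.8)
The plan is to obtain the block recursion \eqref{zl} directly from the per-agent convex-combination identity for desired positions together with the strictly feedforward structure of the DNN established in Section~\ref{Structuring of the DNN and Specifying the DNN Weights}. The base case $l=0$ is immediate, since $\mathbf{z}_0$ collects the known anchor positions $\mathbf{p}_{b_i}=\mathbf{a}_{b_i}$ and corresponds to the identity block of $\tilde{\mathbf{\Gamma}}$; the substantive content is the recursion for $l\ge1$.

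First I would recall the structural fact that pins down the support of each row of $\tilde{\mathbf{\Gamma}}$. By the expansion step of Algorithm~\ref{euclid33}, every agent $b_i\in\mathcal{V}_l$ with $l\ge1$ is the mentee of some simplex cell $\mathcal{R}_{l-1,h}$, and its in-neighbor set is exactly $\mathcal{N}(b_i)=\mathcal{R}_{l-1,h}\subseteq\mathcal{L}_{l-1}$. By the nested-set definition \eqref{wl}, $\mathcal{L}_{l-1}=\bigcup_{h=0}^{l-1}\mathcal{V}_h$, so every in-neighbor of a layer-$l$ agent lies strictly in layers $0,\ldots,l-1$. Consequently $\tilde{\Gamma}_{i,j}=0$ whenever $b_i\in\mathcal{V}_l$ and $b_j\in\mathcal{V}_h$ with $h\ge l$, which is precisely the strictly lower block-triangular pattern with vanishing diagonal blocks for $l\ge1$.

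Next I would write the defining convex combination for each follower, $\mathbf{p}_{b_i}=\sum_{b_j\in\mathcal{N}(b_i)}\tilde{w}_{b_i,b_j}\mathbf{p}_{b_j}$, and identify the coefficients with the entries of $\tilde{\mathbf{\Gamma}}$, namely $\tilde{w}_{b_i,b_j}=\tilde{\Gamma}_{i,j}$ for $b_j\in\mathcal{N}(b_i)$ and $0$ otherwise. Stacking these identities over all $b_i\in\mathcal{V}_l$ (as in the definition of $\mathbf{z}_l$ induced by \eqref{vl}) and grouping the right-hand side by source layer $h$ gathers the weights into the submatrix $\tilde{\mathbf{\Gamma}}_{l,h}$, acting coordinatewise on $\mathbb{R}^2$ (equivalently as $\tilde{\Gamma}_{l,h}\otimes\mathbf{I}_2$). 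Because the support argument forbids any contribution from layers $h\ge l$, the sum truncates at $h=l-1$, yielding \eqref{zl}. I would also note that \eqref{weightcond} makes each row of $\tilde{\mathbf{\Gamma}}$ nonnegative with unit sum, so each coordinate of $\mathbf{z}_l$ is a genuine convex combination of the earlier layers, although this observation is not needed for the algebraic identity itself.

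The only substantive obstacle is the support/feedforward step: establishing that no follower's desired position depends on a same-layer or later-layer agent. This is exactly where the DNN construction must be invoked, and care is required to state that the mentee assignment places $b_i$ in layer $l$ precisely one level above its neighbor simplex $\mathcal{R}_{l-1,h}\subseteq\mathcal{L}_{l-1}$. Once this structural claim is in hand, the block recursion is a direct rearrangement of the per-agent convex combinations, so no further computation is required.
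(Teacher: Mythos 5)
Your proposal is correct and follows essentially the same route as the paper: the per-agent convex-combination identity $\mathbf{p}_{b_i}=\sum_{j\in\mathcal{N}(b_i)}\tilde{w}_{b_i,j}\mathbf{p}_j$, stacked into $\mathbf{z}=\tilde{\mathbf{\Gamma}}\mathbf{z}$, with the truncation at $h=l-1$ coming from the strictly lower block-triangular structure of $\tilde{\mathbf{\Gamma}}$. The only difference is that you make explicit the support argument (in-neighbors of a layer-$l$ agent lie in $\mathcal{L}_{l-1}=\bigcup_{h=0}^{l-1}\mathcal{V}_h$ by Algorithm~\ref{euclid33} and \eqref{wl}), which the paper asserts in the text preceding the proposition and then invokes implicitly when it says the recursion ``follows directly.''
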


\begin{proof}
Following Algorithm~\ref{desiredpositions}, each agent position satisfies
\begin{equation}
\mathbf{p}_{b_i}
=\sum_{j\in\mathcal{N}(b_i)} \tilde{w}_{b_i,j}\mathbf{p}_j .
\end{equation}
Stacking the agent positions yields
\begin{equation}
\mathbf{z}=\tilde{\mathbf{\Gamma}}\mathbf{z},
\end{equation}
from which the recursive relation \eqref{zl} follows directly.
\end{proof}
    

\begin{theorem}
    Let each AOC agent $b_i \in \mathcal{V}\setminus \mathcal{V}_0$ be able to move from any triangle associated with a state $s\in \mathcal{S}_{b_i}$ to the centroid of the triangle associated with its optimal next state $\pi_{b_i}^*(s)\in \mathcal{S}_{b_i}$. Then, for every $b_i\in \mathcal{V}\setminus \mathcal{V}_0$, the desired actual position $\mathbf{r}_{b_i}[t]$ converges asymptotically to $\mathbf{p}_{b_i}$.

\end{theorem}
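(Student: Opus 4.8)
The plan is to exploit the feedforward, layered structure of the DNN and argue by induction on the layer index $l\in\mathcal{M}$. The key structural fact is that every in-neighbor of an agent $b_i\in\mathcal{V}_l$ lies in a strictly lower layer, so $\mathbf{\Gamma}[t]$, $\hat{\mathbf{\Gamma}}$, and $\tilde{\mathbf{\Gamma}}$ all share the strictly lower block–triangular form established above. This decouples the analysis layer by layer: once all agents in layers $0,\ldots,l-1$ have reached their desired positions, the in-neighbor inputs driving any $b_i\in\mathcal{V}_l$ are asymptotically frozen, and the behavior of $b_i$ reduces to a single-agent problem with fixed neighbor geometry.

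For the base case $l=0$, the anchored agents satisfy $\mathbf{r}_{b_i}[t]\equiv\mathbf{p}_{b_i}$ by construction (cf.\ Theorem~\ref{thm:markov}), so the claim holds trivially. For the inductive step, assume $\mathbf{r}_{b_j}[t]\to\mathbf{p}_{b_j}$ for every agent $b_j$ in layers $0,\ldots,l-1$, and fix $b_i\in\mathcal{V}_l$. Since $\mathcal{N}(b_i)\subseteq\mathcal{L}_{l-1}$, the communication triangle $\mathcal{T}_{b_i}[t]=\mathrm{conv}\{\mathbf{r}_{b_{i_1}}[t],\mathbf{r}_{b_{i_2}}[t],\mathbf{r}_{b_{i_3}}[t]\}$ converges to the desired triangle $\tilde{\mathcal{T}}_{b_i}=\mathrm{conv}\{\mathbf{p}_{b_{i_1}},\mathbf{p}_{b_{i_2}},\mathbf{p}_{b_{i_3}}\}$. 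Because each discretization cell and its centroid $\bar{\mathbf{r}}(s)$ are affine (barycentric) functions of the triangle vertices, all cell centroids converge continuously, the locally sensed set satisfies $\mathcal{D}_{b_i}[t]\to\tilde{\mathcal{D}}_{b_i}$, and hence $\mathbf{h}_{b_i}[t]\to\tilde{\mathbf{h}}_{b_i}$.

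Next I would show that the goal state stabilizes. Since $g_{b_i}$ is the unique cell containing $\mathbf{h}_{b_i}[t]$ and $\mathbf{h}_{b_i}[t]\to\tilde{\mathbf{h}}_{b_i}$, provided $\tilde{\mathbf{h}}_{b_i}$ lies in the interior of a cell there is a finite time after which $g_{b_i}=\tilde{g}_{b_i}$ is constant. The MDP of $b_i$ then becomes time-invariant in the limit: the cost $\mathcal{C}_{b_i}$ stabilizes through the converging centroids $\bar{\mathbf{r}}(s)$ and the fixed goal $\bar{\mathbf{r}}(\tilde{g}_{b_i})$, so the optimal value function and policy $\pi_{b_i}^*$ stabilize. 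Invoking Assumptions~\ref{currentaoutside}--\ref{currentinside} together with the hypothesis that the AOC agent can move to the centroid of the triangle of its optimal next state, the trajectory reaches $\tilde{g}_{b_i}$ in finitely many steps and remains there (a policy fixed point). The actual update then reads $\mathbf{r}_{b_i}[t+1]=\sum_{j\in\mathcal{N}(b_i)} w_{b_i,j}(\tilde{g}_{b_i})\,\mathbf{r}_j[t]$ with goal-cell barycentric weights equal to the desired weights $\tilde{w}_{b_i,j}$; letting $\mathbf{r}_j[t]\to\mathbf{p}_j$ yields $\mathbf{r}_{b_i}[t]\to\sum_j\tilde{w}_{b_i,j}\mathbf{p}_j=\mathbf{p}_{b_i}$ by the defining convex-combination relation for the desired positions, closing the induction.

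The main obstacle is the discrete goal-cell identification: because the goal is selected by which triangular cell contains the continuously moving point $\mathbf{h}_{b_i}[t]$, the map from neighbor positions to goal index is only piecewise constant and may jump across cell boundaries. The argument therefore needs a non-degeneracy condition ensuring $\tilde{\mathbf{h}}_{b_i}$ does not lie on a cell boundary, so that the goal cell—and with it the weights $\tilde{w}_{b_i,j}$—stabilizes; treating the boundary case rigorously, or showing it occurs only on a measure-zero set of target configurations, is the delicate part. A secondary subtlety is ensuring the limiting MDP stabilizes uniformly enough that the finite-time reachability from AOC latches onto the correct limiting goal rather than chasing a slowly varying target.
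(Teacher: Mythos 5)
Your proposal is correct and follows essentially the same route as the paper's own proof: induction on the DNN layer index, where convergence of the lower layers asymptotically freezes each agent's communication triangle, local target set, centroid $\mathbf{h}_{b_i}$, goal state, and hence its barycentric weights, forcing $\mathbf{y}_l\to\mathbf{z}_l$ layer by layer via the block lower-triangular structure of $\mathbf{\Gamma}$. Your version is in fact slightly more careful than the paper's: the paper states the inductive hypothesis as exact equality $\mathbf{y}_{l-1}=\mathbf{z}_{l-1}$ after only establishing asymptotic convergence, and it passes from $\mathbf{h}_{b_i}[t]\to\tilde{\mathbf{h}}_{b_i}$ to $g_{b_i}\to\tilde{g}_{b_i}$ without comment, whereas you keep the hypothesis asymptotic throughout and explicitly flag the non-degeneracy condition (that $\tilde{\mathbf{h}}_{b_i}$ not lie on a cell boundary) that this discrete goal-identification step genuinely requires.
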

\begin{proof}
Under the assumptions of the theorem, the subgroup dynamics satisfy
\begin{equation}\label{yld}
\mathbf{y}_l[t+1]
=
\sum_{h=0}^{l-1}\mathbf{\Gamma}_{l,h}\mathbf{y}_h[t],
\qquad
\forall\, l\in \mathcal{M}\setminus\{0\}.
\end{equation}

For $l=1$, $\mathbf{y}_0=\mathbf{z}_0$ is constant. Hence,
$\tilde{\mathcal{D}}_{b_i}=\mathcal{D}_{b_i}$ and
$\tilde{g}_{b_i}=g_{b_i}\in\mathcal{S}(b_i)$ define fixed goal states for all
$b_i\in\mathcal{V}_1$.
By the MDP framework in Section~\ref{Training the DNN Weights}, each agent
$b_i\in\mathcal{V}_1$ converges to $\mathbf{p}_{b_i}$, the center of $g_{b_i}$
enclosing $\tilde{\mathbf{h}}_{b_i}=\mathbf{h}_{b_i}$, implying
$\mathbf{y}_1\to\mathbf{z}_1$. 
Assume for some $l\ge2$ that $\mathbf{y}_{l-1}=\mathbf{z}_{l-1}$.
Then, for all $b_i\in\mathcal{V}_l$, the data sets
$\mathcal{D}_{b_i}[t]\to\tilde{\mathcal{D}}_{b_i}$ and
$\mathbf{h}_{b_i}[t]\to\tilde{\mathbf{h}}_{b_i}$, which implies convergence of
the associated goal states $g_{b_i}\to\tilde{g}_{b_i}$.
Consequently, $\mathbf{\Gamma}_{l,h}[t]\to\tilde{\mathbf{\Gamma}}_{l,h}$ for
$h=0,\ldots,l-1$, and \eqref{yld} yields $\mathbf{y}_l[t]\to\mathbf{z}_l$. 
By induction, $\mathbf{y}_l[t]\to\mathbf{z}_l$ for all
$l\in\mathcal{M}\setminus\{0\}$, and therefore
$\mathbf{r}_i[t]\to\mathbf{p}_i$ for all
$i\in\mathcal{V}\setminus\mathcal{V}_0$.
\end{proof}

\begin{figure}
    \centering
    \includegraphics[width=0.48\textwidth]{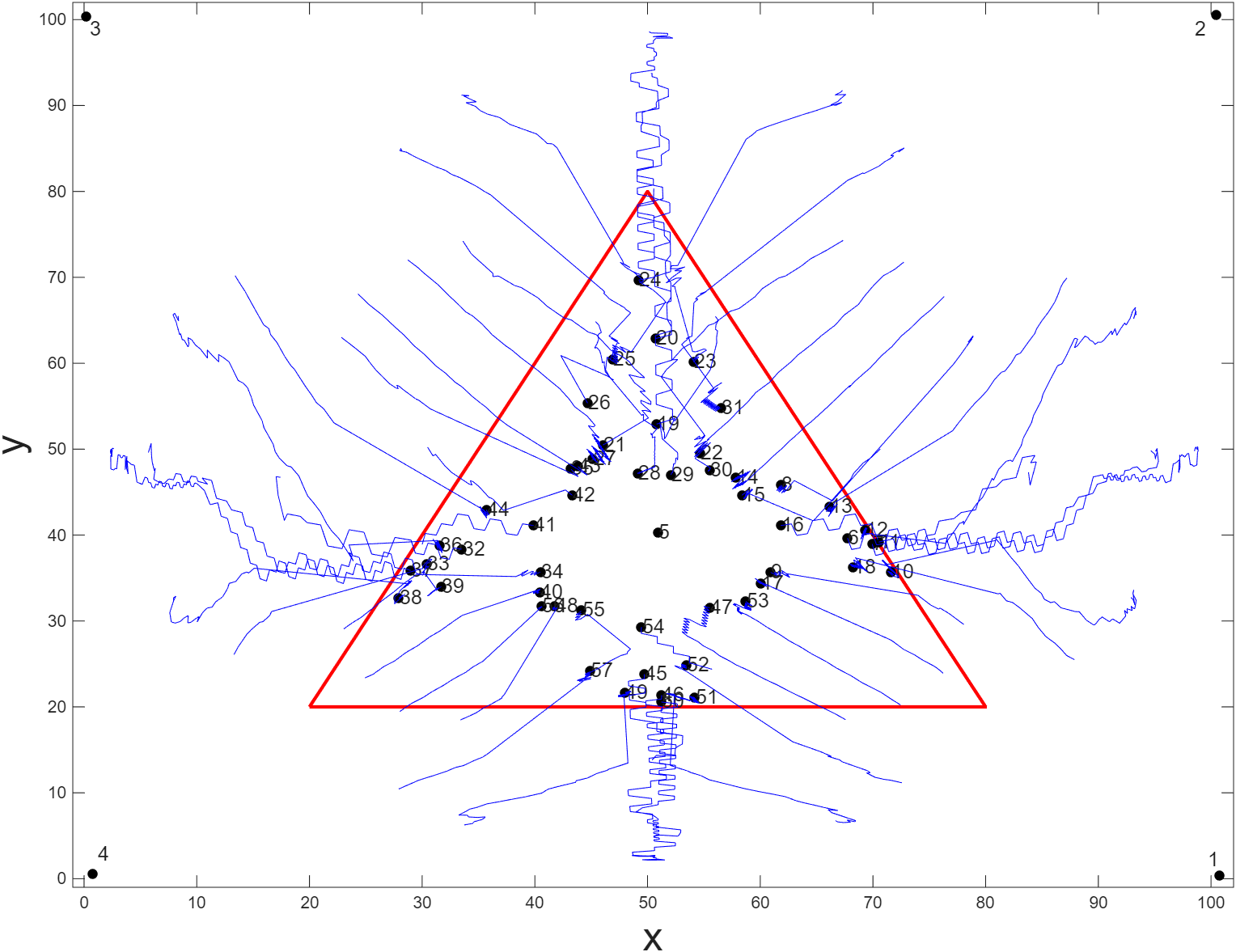}
    \vspace{-0.25cm}
    \caption{Agent paths under the single-step reachability assumption.
All agents $i\in\mathcal{V}$ asymptotically converge to their desired positions $\mathbf{p}_i$.}
    \label{AgentsPathsPrecise}
\end{figure}
\section{Simulation Results}\label{Simulation Results}
We consider an $57$-agent system with the initial configuration shown in Fig.~\ref{InitialFormation}. 
Based on the reference formation, $\mathcal{V}_B=\{1,2,3,4\}$ and $b_5=5$ denote the boundary and core agents, respectively, while all remaining agents are classified as interior. 
The resulting inter-agent communication structure is encoded by the DNN shown in Fig.~\ref{DNNStructure} (arrows in  Fig.~\ref{InitialFormation}), constructed using the framework of Section~\ref{Structuring of the DNN and Specifying the DNN Weights}. It is desired that the multi-agent system cover the triangular domain shown in Fig.~\ref{DesiredFormation}, where the environmental target data defined by $\mathcal{D}$ are shown in green. The desired positions of the agent team, denoted $\mathbf{p}_1$ through $\mathbf{p}_{57}$, are shown by black. To define the state space, we choose $M_i=35$, for every $i\in \mathcal{V}\setminus \mathcal{V}_0$, which in turn implies that $\left|\mathcal{S}_i^C\right|=35^2=1225$.
\subsection{Evolution under Finite-Time Reachability of $\mathbf{h}_i$}

In this section, we assume that each agent $i\in \mathcal{V}\setminus\mathcal{V}_0$ can reach $\mathbf{h}_i[t]$ in a single time step, which implies $\mathcal{P}_i=\mathbf{h}_i[t]$.
Under this assumption, the resulting agent paths are shown in Fig.~\ref{AgentsPathsPrecise}, where all agents $i\in\mathcal{V}$ asymptotically converge to their desired positions $\mathbf{p}_i$.

To further illustrate convergence, Figs.~\ref{PreciseTraj}(a)–(b) show the $x$- and $y$-components of the actual and desired positions of agent~29, $\mathbf{r}_{29}[t]$ and $\mathbf{p}_{29}$, respectively, as functions of discrete time $t$. The trajectories demonstrate rapid convergence, with $\mathbf{r}_{29}[t]$ reaching $\mathbf{p}_{29}$ in fewer than $40$ time steps.
\begin{figure}[h]
\centering
\subfigure[]{\includegraphics[width=0.49\linewidth]{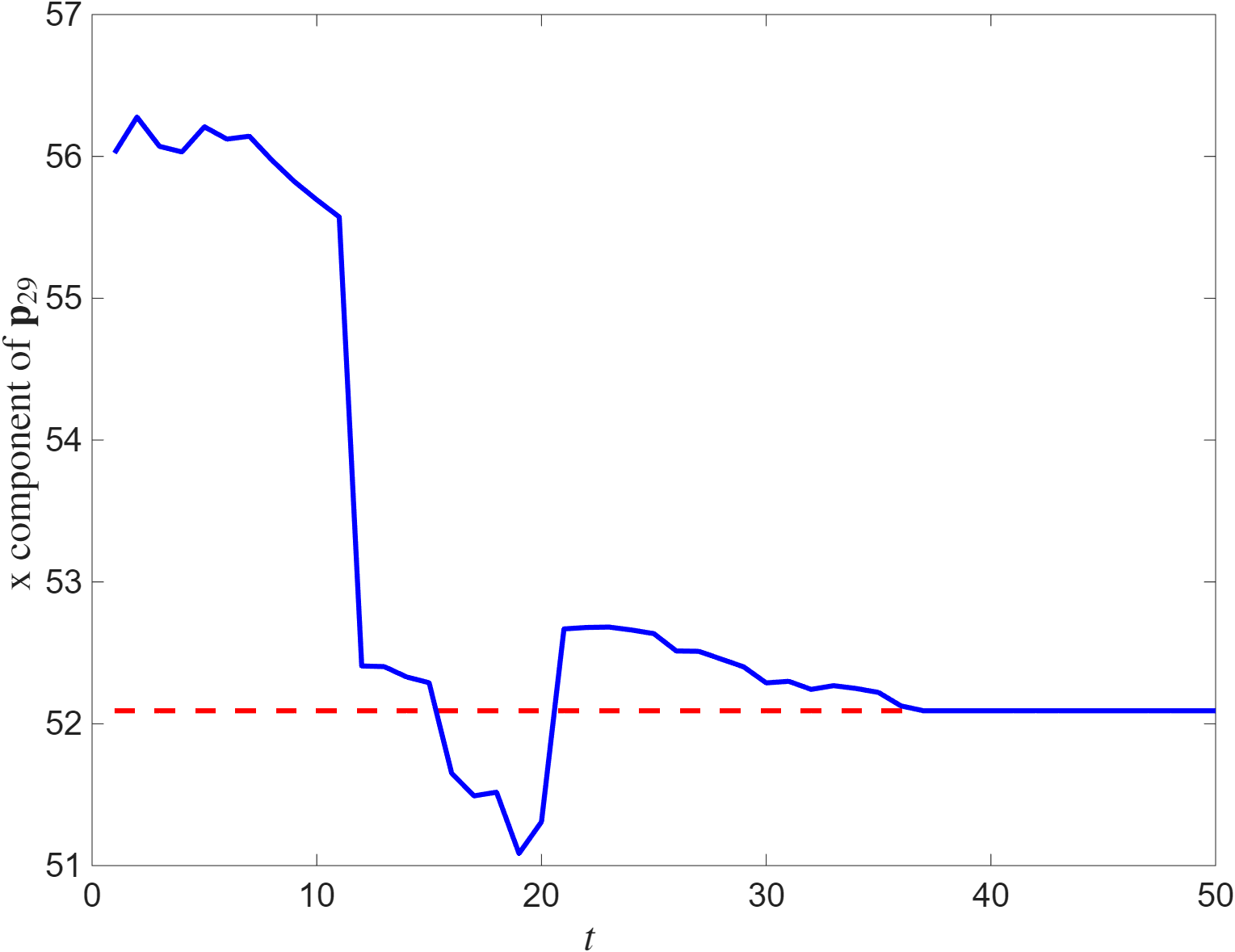}}
\subfigure[]{\includegraphics[width=0.49\linewidth]{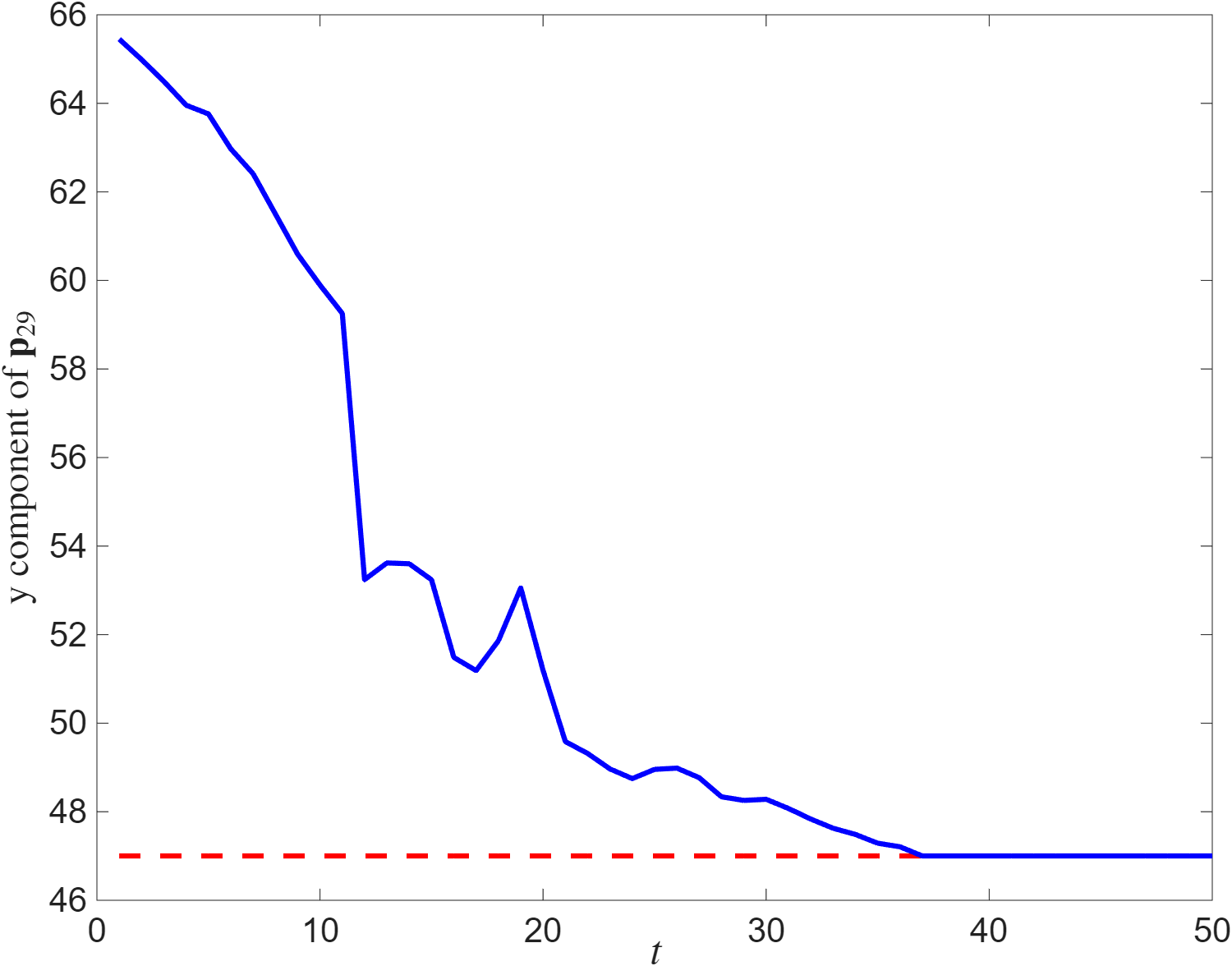}}
\vspace{-0.4cm}
\caption{Time evolution of the $x$- and $y$-components of the actual and desired positions of agent~29, illustrating convergence to $\mathbf{p}_{29}$.
}
\label{PreciseTraj}
\end{figure}

\subsection{Evolution under AOC Assumption}
In this section, we analyze the agents’ evolution under Assumption~\ref{currentinside}, wherein the desired position $\mathbf{p}_i$ is constrained to lie within an edge-aligned triangular region strictly contained in $\mathcal{T}_i(t)$, guaranteeing $\eta = 0.05$ for all $i \in \mathcal{V}\setminus\mathcal{V}_0$. Under this condition, the resulting closed-loop trajectories of all agents in the $x$–$y$ plane are shown in Fig.~\ref{AOC1}, demonstrating coordinated motion and spatial containment. Moreover, Fig.~\ref{AOC2} depicts the temporal evolution of the $x$- and $y$-components of the desired trajectory for agent~43, illustrating precise tracking behavior over discrete time $t$.

\begin{figure}
    \centering
    \includegraphics[width=0.48\textwidth]{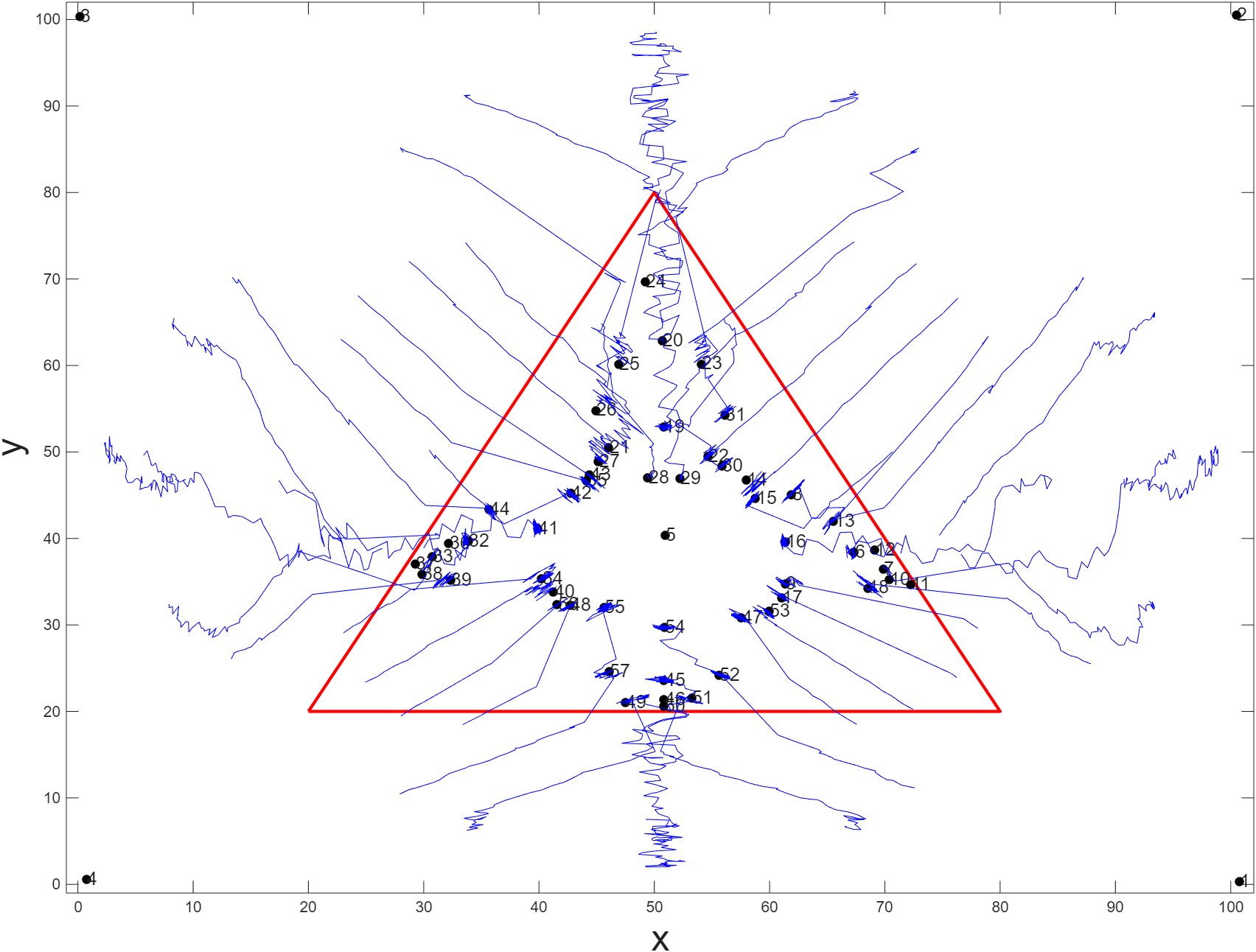}
    \vspace{-0.25cm}
    \caption{Agent trajectories under the single-step reachability assumption and Assumption~\ref{currentinside}. 
All agents $i\in\mathcal{V}$ asymptotically converge to their desired positions $\mathbf{p}_i$.
}
    \label{AOC1}
\end{figure}

\begin{figure}[h]
\centering
\subfigure[]{\includegraphics[width=0.49\linewidth]{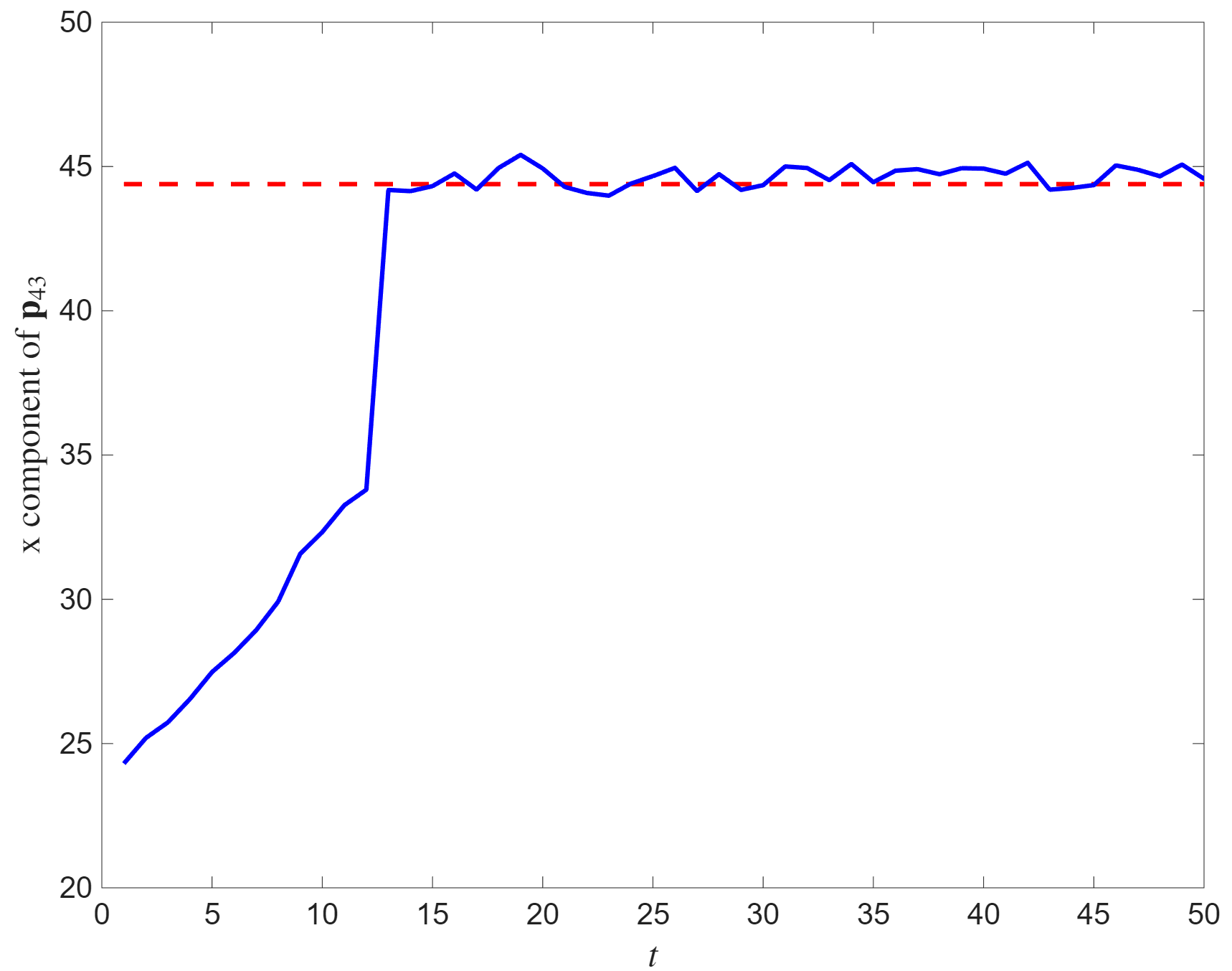}}
\subfigure[]{\includegraphics[width=0.49\linewidth]{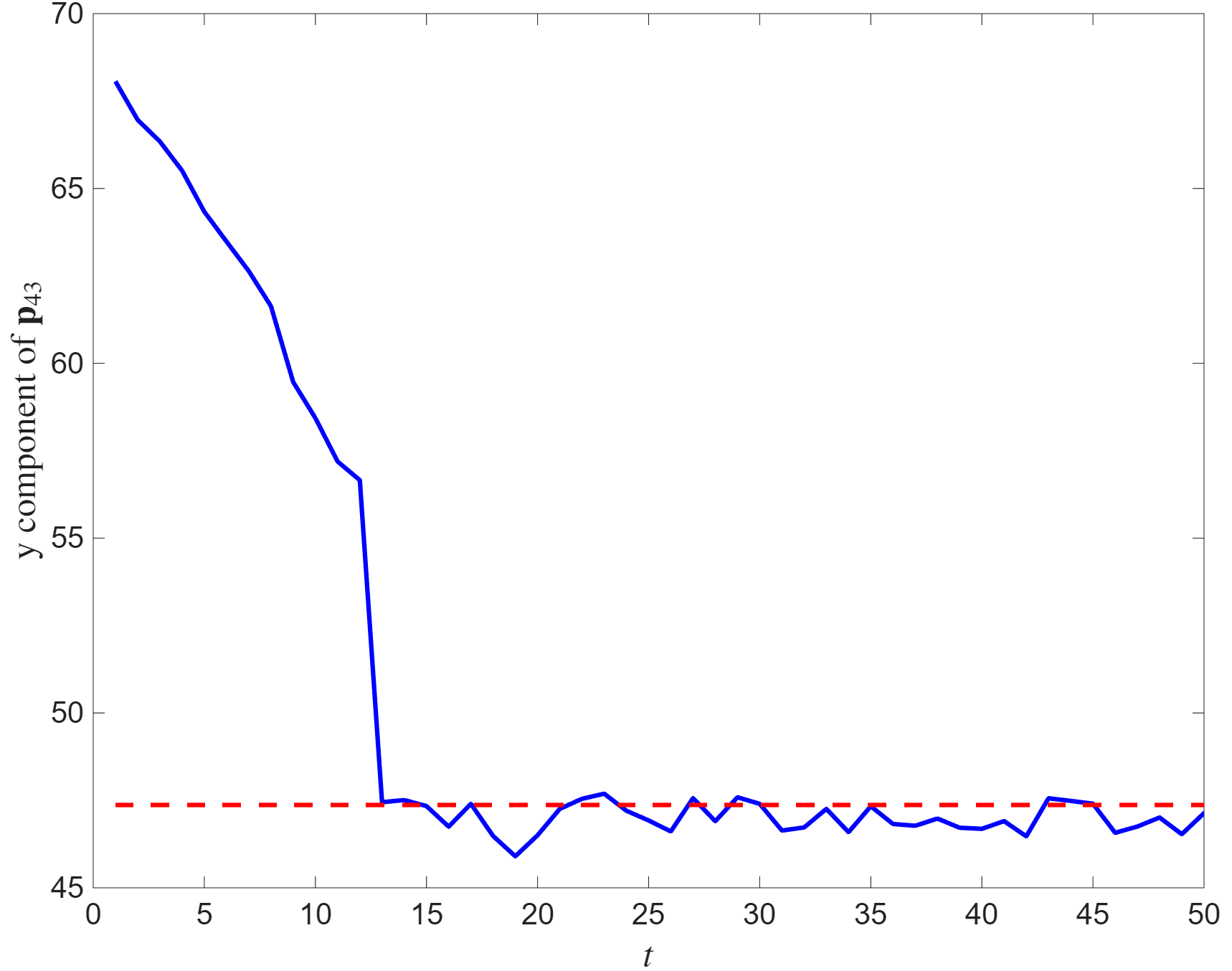}}
\vspace{-0.4cm}
\caption{Time evolution of the $x$- and $y$-components of the actual and desired positions of agent~43, illustrating convergence to $\mathbf{p}_{43}$.
}
\label{AOC2}
\end{figure}

\section{Conclusion}
\label{Conclusion}
This paper presented a structured learning–based framework for decentralized coordination and ground coverage in multi-agent systems, in which inter-agent communication is encoded through a geometrically induced deep neural network. By exploiting the reference formation, agents are systematically classified into boundary and interior groups, yielding a hierarchical communication architecture with explicitly constrained and interpretable communication weights. These weights are selected from finite sets and governed by a decentralized Markov decision process, ensuring normalized interactions and well-posed local decision making. Within this framework, convergence of agent trajectories to desired goal configurations associated with environmental target data was established under AOC assumptions. Numerical simulations validate the proposed policy-based decentralized coverage strategy and demonstrate its ability to capture geometric structure and achieve effective coverage of complex domains.

     \bibliographystyle{plain}        
\bibliography{autosam, reference, citation, CoverageReference}           
\begin{IEEEbiography}[{\includegraphics[width=1in,height=1.25in,clip,keepaspectratio]{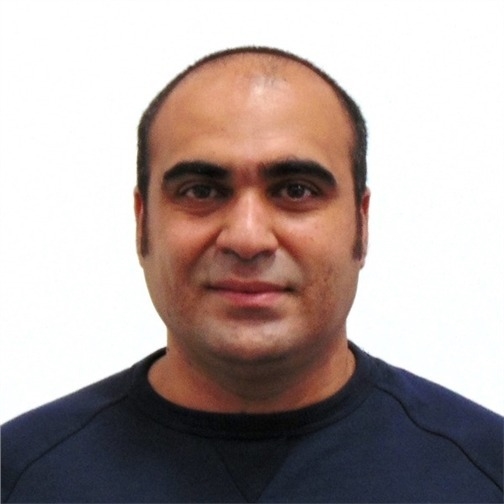}}]
{\textbf{Hossein Rastgoftar}} an Assistant Professor at the University of Arizona. Prior to this, he was an adjunct Assistant Professor at the University of Michigan from 2020 to 2021. He was also an Assistant Research Scientist (2017 to 2020) and a Postdoctoral Researcher (2015 to 2017) in the Aerospace Engineering Department at the University of Michigan Ann Arbor. He received the B.Sc. degree in mechanical engineering-thermo-fluids from Shiraz University, Shiraz, Iran, the M.S. degrees in mechanical systems and solid mechanics from Shiraz University and the University of Central Florida, Orlando, FL, USA, and the Ph.D. degree in mechanical engineering from Drexel University, Philadelphia, in 2015. 
\end{IEEEbiography}

\end{document}